\pdfoutput=1
\documentclass[lettersize,journal]{IEEEtran}
\usepackage{cite}
\usepackage{amsmath,amssymb,amsfonts}
\usepackage{amsthm}
\usepackage{algorithm}
\usepackage{algpseudocode}
\usepackage{graphicx}
\usepackage{epsfig} % for postscript graphics files
\usepackage{optidef}
\usepackage{textcomp}
\usepackage{csquotes}
\usepackage{url}
\usepackage{multicol}
\usepackage{color}
\usepackage{tabularray}

\newtheorem{theorem}{Theorem}
\newtheorem{crl}{Corollary}
\newtheorem{lemma}{Lemma}

\newtheorem{assumption}{Assumption}
\allowdisplaybreaks

\hyphenation{op-tical net-works semi-conduc-tor IEEE-Xplore}
% updated with editorial comments 8/9/2021

\begin{document}

\title{Data-Driven Robust Reinforcement Learning Control of Uncertain Nonlinear Systems:\\ Towards a Fully-Automated, Insulin-Based\\ Artificial Pancreas}

\author{Alexandros Tanzanakis and John Lygeros
        % <-this % stops a space
\thanks{Alexandros Tanzanakis and John Lygeros: Department of Information Technology and Electrical Engineering, ETH Zurich, Switzerland, {\tt\small \{atanzana,jlygeros\}@ethz.ch}}% <-this % stops a space
\thanks{This research work was supported by the European Research Council (ERC) under the project OCAL, grant number 787845.}}

% The paper headers
\markboth{Journal of \LaTeX\ Class Files,~Vol.~14, No.~8, August~2021}%
{Shell \MakeLowercase{\textit{et al.}}: A Sample Article Using IEEEtran.cls for IEEE Journals}

%\IEEEpubid{0000--0000/00\$00.00~\copyright~2021 IEEE}
% Remember, if you use this you must call \IEEEpubidadjcol in the second
% column for its text to clear the IEEEpubid mark.

\maketitle

\begin{abstract}
In this paper, a novel robust tracking control scheme for a general class of discrete-time nonlinear systems affected by unknown bounded uncertainty is presented. By solving a parameterized optimal tracking control problem subject to the unknown nominal system and a suitable cost function, the resulting optimal tracking control policy can ensure closed-loop stability by achieving a sufficiently small tracking error for the original uncertain nonlinear system. The computation of the optimal tracking controller is accomplished through the derivation of a novel Q-function-based $\lambda$-Policy Iteration algorithm. The proposed algorithm not only enjoys rigorous theoretical guarantees, but also avoids technical weaknesses of conventional reinforcement learning methods. By employing a data-driven, critic-only least squares implementation, the performance of the proposed algorithm is evaluated to the problem of fully-automated, insulin-based, closed-loop glucose control for patients diagnosed with Type 1 and Type 2 Diabetes Mellitus. The U.S. FDA-accepted DMMS.R simulator from the Epsilon Group is used to conduct a comprehensive in silico clinical campaign on a rich set of virtual subjects under completely unannounced meal and exercise settings. Simulation results underline the superior glycaemic behavior achieved by the derived approach, as well as its overall maturity for the design of highly-effective, closed-loop drug delivery systems for personalized medicine.
\end{abstract}

\begin{IEEEkeywords}
approximate dynamic programming, artificial pancreas, data-driven control, diabetes mellitus, reinforcement learning, robust control.
\end{IEEEkeywords}

\section{Introduction}
\IEEEPARstart{R}{obust} control approaches, rigorously studied in the classical feedback control framework \cite{b1}, are currently in the spotlight of modern learning-based control methods \cite{b2}. The existence of system uncertainty (ranging from unmodeled system dynamics and inexact system identification to external pertubations and disturbances) is well-known to seriously degrade the overall behavior and closed-loop stability of critical complex systems. To ensure an acceptable level of performance, a robust controller should guarantee closed-loop stability under any realization of uncertainty in effect. The leading works \cite{b3,b4,b5,b6} introduced the idea of solving a model-based robust stabilization problem for an uncertain linear system through the derivation of optimal control methods. Towards this direction, approximate dynamic programming (ADP) \cite{b7,b8}, by marrying reinforcement learning (RL) control with approximation methods, is widely used for effective optimal control of continuous-time and discrete-time dynamical systems.\\
For the case of uncertain systems, a rich family of ADP-based robust stabilization approaches has been proposed. These methods are mainly classified as follows: 1) via transformation into an optimal control problem for either the nominal \cite{b9,b10,b11,b12,b13,b14} or a virtual auxiliary system \cite{b15,b16,b17,b18} with a suitable cost function, 2) zero-sum game formulation for $H_{\infty}$ control \cite{b19,b20,b21,b22}, 3) guaranteed cost control \cite{b23,b24,b25}, 4) disturbance estimation and compensation design \cite{b26,b27,b28}, 5) sliding mode control \cite{b29,b30} and 6) robust ADP \cite{b31,b32,b33}. Despite the wide range of available methods, they inherit weakening assumptions which may limit their application on real complex systems. In particular, apart from very few results on uncertain discrete-time systems with nominal linear-time-invariant (LTI) or affine nonlinear dynamics \cite{b10,b14,b15,b21,b26,b33}, the vast majority of state-of-art approaches are applicable only to uncertain continuous-time systems. Furthermore, the complete knowledge of the system and disturbance dynamics (either assumed known apriori or obtained via system identification) is required to implement nearly all of the available methods. On a final note, the Policy Iteration (PI) \cite{b34,b35} algorithm is mainly utilized to implement the derived robust stabilization strategies. Even though PI enjoys relatively fast convergence to optimal control policies, it suffers from the strict requirement for an initial stabilizing control policy, which is challenging or even impossible to compute in the case of complex nonlinear system dynamics. On the contrary, Value Iteration (VI) \cite{b36,b37} benefits from an easy-to-realize initialization condition, at the expense of slower convergence compared to PI.\\
The design of high-performance ADP methods that integrate the merits of PI and VI algorithms has received increasing attention. To this end, the $\lambda$-PI algorithm \cite{b38,b39} conveniently enables a balancing weight on the trade-offs between PI and VI. $\lambda$-PI is presented and analyzed in \cite{b40,b41,b42} for finite Markov Decision Processes. Data-driven, randomized variants of $\lambda$-PI for contractive models are discussed in\cite{b43,b44}. A model-free $\lambda$-PI algorithm for optimal regulation of discrete-time deterministic linear systems is proposed in \cite{b45}. However, to the best of the authors knowledge, the algorithm has not been studied for the case of uncertain systems, remaining an important open problem.\\
Due to its outstanding theoretical advancements, learning-based control has found rigorous applications on personalized medicine e.g., on management and treatment of Diabetes Mellitus (DM) \cite{b46}. DM refers to a serious chronic metabolic disease characterized by critically elevated blood glucose levels. This can lead to severe life-threatening complications, including blindness, diabetic neuropathy and nephropathy, cardiovascular and chronic kidney disease. DM is mainly classified as Type 1 DM (T1DM) and Type 2 DM (T2DM) \cite{b47}. On the one hand, T1DM is caused by the autoimmune destruction of pancreatic $\beta$-cells which secrete a glucose- decreasing hormone called insulin. Therefore, T1DM patients are required to get lifelong exogenous doses of insulin. On the other hand, T2DM refers to the condition where, either pancreatic $\beta$-cells do not secrete enough insulin, or the human body does not respond effectively to the produced insulin. T2DM treament options mainly involve lifestyle changes and oral medication. However, if these options are not effective in regulating blood glucose levels, then exogenous administration of insulin is required, similar to T1DM patients.\\
The artificial pancreas (AP) \cite{b48} is considered a top-notch wearable drug delivery system for closed-loop glucose regulation. A conventional insulin-based AP consists of a control algorithm that automatically computes the required amounts of insulin to be continuously administered to the patient through an insulin pump. This is accomplished by collecting and utilizing patient-specific data such as blood glucose measurements received from a continuous glucose monitor (CGM), as well as meal and exercise information manually announced from the patient (e.g., carbohydrate (CHO) amounts, starting times of meals and exercises, exercise intensity etc.). The related state-of-art literature include results and approaches based on Model Predictive Control (MPC) \cite{b49,b50}, Proportional-Integral-Derivative (PID) \cite{b51}, $H_{\infty}$ control \cite{b52} and RL \cite{b53,b54,b55}. Interested readers can refer to the extensive review papers \cite{b56,b57,b58}. The vast majority of available methods are focused on T1DM, with very few results on T2DM (e.g., \cite{b59,b60}). Furthermore, commonly occured misestimation of meal and exercise information provided by the patient to the AP, in combination to significant metabolic delays of insulin action due to subcutaneous hormone delivery, can lead to dangerous insulin misdose and hence critically low or high blood glucose levels.\\
In recent years, some attempts have been made towards the design of a fully-automated, insulin-based AP systems for personalized, closed-loop glucose control, i.e., without requiring any meal and exercise announcements from the patient. Such approaches include MPC \cite{b61,b62,b63}, linear-parameter-varying (LPV) control \cite{b64}, feedforward control and disturbance observer design \cite{b65}, multiple model PID control \cite{b66}, model-free PID control \cite{b67} and RL control \cite{b68}. However, please note that all methods in \cite{b61,b62,b63,b64,b65,b67,b68} assume the construction of complex data-driven personalized models of the glucose regulatory system, which requires a significantly large amount of data to compute. Furthermore, the conducted in silico studies are restricted only to specific age cohorts (mainly to T1DM adult subjects). Last but not least, the proposed methods do not enjoy theoretical guarantees. Therefore, the problem of theoretically rigorous algorithmic methods for high-performance, fully-automated, insulin-based glucose control still remains a highly challenging open problem.  \\        
In this work:
\begin{itemize}
\item[1)] We propose a novel robust tracking control scheme for unknown discrete-time nonlinear systems affected by unknown bounded uncertainty. By solving a parameterized optimal tracking control problem subject to the unknown nominal system and an appropriate cost function, the resulting optimal tracking control policy can ensure closed-loop stability in terms of achieving a sufficiently small tracking error for the original uncertain system.
\item[2)] The computation of the optimal tracking control policy is achieved by deriving a novel Q-function-based variant of the $\lambda$-PI algorithm. The proposed algorithm is shown to benefit from theoretically rigorous monotonicity and convergence guarantees and avoid technical weaknesses of PI and VI. A data-driven, critic-only least squares approximation approach is employed, which enables a straightforward implementation solution for complex systems.
\item[3)] We evaluate the performance of the overall robust tracking control algorithm on the problem of a fully-automated, insulin-based, personalized glucose regulation. A U.S. FDA-accepted metabolic simulator is used to conduct highly challenging, in silico clinical studies on a rich variety of representantive T1DM and T2DM virtual subjects.
\end{itemize}

The structure of the paper is the following. The problem description is provided in Section II. The proposed robust tracking control scheme is derived and analyzed in Section III. A $\lambda$-PI algorithm that solves the parameterized optimal tracking control problem is presented in Section IV. A data-driven, critic-only least squares approximation approach is derived in Section V. The performed in silico clinical campaign is presented and discussed in Section VI. Finally, conclusions are given in Section VII.\\
\textbf{Notation.} $\mathbb{N}$ and $\mathbb{N}_{0}$ are the sets of natural numbers and natural numbers including $0$ respectively. $\mathbb{R_{+}}$ defines the set of non-negative real numbers. $\mathbb{S}^{N}_{++}$ defines the set of $N\times N$ symmetric positive definite matrices. $\|\cdot\|_2$ defines either the Euclidean vector norm or the matrix induced norm. By defining $\mathcal{X}\subset \mathbb{R}^{N}$, $\|f(x)\|_{\infty}$ defines the maximum norm of the function $f:\mathcal{X}\rightarrow \mathbb{R}_{+}$. \textit{Regarding measurement units:} mg/dL defines milligrams per deciliter and mg/dL/min refers to milligrams per deciliter per minute. U defines units, while U/5mins defines units per 5 minutes.  
\section{Problem Definition}
In this work, we consider the following class of discrete-time uncertain nonlinear systems given by
\begin{equation}
\label{eq:1}
x_{k+1} = f(x_k,u_k) + d(x_k),
\end{equation}
where $x\in \mathcal{X}\subset \mathbb{R}^{n}$ and $u \in \mathcal{U}\subset \mathbb{R}^{m}$ describe the system state and control input respectively, $k\in \mathbb{N}_0$ is the discrete-time index, $f:\mathcal{X}\times \mathcal{U}\rightarrow \mathcal{X}$ and $d:\mathcal{X}\rightarrow \mathcal{X}$ defines a general class of system uncertainty caused by external disturbances and perturbations, unmodeled system dynamics, system identification errors etc. Let also $r\in \mathcal{X}$ be a bounded reference signal described by an exosystem
\begin{equation}
\label{eq:2}
r_{k+1} = h(r_k),
\end{equation}
with $h:\mathcal{X}\rightarrow \mathcal{X}$.
\begin{assumption}
$\mathcal{X}$ and $\mathcal{U}$ are compact sets which contain the origin. $f(x,u)$ and $h(r)$ are Lipschitz continuous on $\mathcal{X}\times \mathcal{U}$ and $\mathcal{X}$ respectively with $f(0,0)=h(0)=0$. The mathematical expressions of $f(x,u)$, $d(x)$ and $h(r)$ are unknown. $d(x)$ is bounded by a known function $\Delta:\mathcal{X}\rightarrow \mathbb{R}_{+}$, i.e., $\big\|d(x)\big\|_{2}\leq \Delta(x)$ for all $x\in \mathcal{X}$ with $d(0)=\Delta(0)=0$.
\end{assumption}
We are interested in achieving robust tracking control of the uncertain nonlinear system \eqref{eq:1}, i.e., compute a state-reference feedback control policy $\mu(x,r):\mathcal{X}^{2}\rightarrow \mathcal{U}$ such that the closed-loop uncertain nonlinear system \eqref{eq:1} tracks a desired reference signal \eqref{eq:2} for any realization of uncertainty $d(\cdot)$ with $\big\|d(x)\big\|_{2}\leq \Delta(x)$ for all $x\in \mathcal{X}$.
\section{A Robust Tracking Control Scheme For Discrete-Time Uncertain Systems}
In this section, we show that the original robust tracking control problem discussed in Section II can be transformed into an optimal tracking control problem for the discrete-time nominal system
\begin{equation}
\label{eq:3}
x_{k+1} = f(x_k,u_k)
\end{equation}
subject to a suitable cost function. Therefore, we now focus on the computation of a state-reference feedback control policy $\mu(x,r)$, such that the closed-loop nominal system \eqref{eq:3} tracks a reference signal generated by \eqref{eq:2} through minimization of an appropriate infinite-horizon cost function. By defining $\mathcal{Z}= \mathcal{X}^{2}\times \mathcal{U}$, the infinite-horizon cost under consideration is defined as
\begin{align}
\label{eq:4}
J^{\mu}(x_0,r_0) =& \sum_{k=0}^{\infty} \gamma^{k}\bigg[l\big(x_k,r_k,\mu(x_k,r_k)\big)\nonumber \\
&+\Gamma \big(x_k,r_k,\mu(x_k,r_k)\big)\bigg] ,
\end{align}
where $\gamma \in (0,1]$ is the discount factor, $l:\mathcal{Z}\rightarrow \mathbb{R}_{+}$ is given by
\begin{equation}
\label{eq:5}
l\big(x,r,\mu(x,r)\big) = (x-r)^{T}S(x-r)+ [\mu(x,r)]^{T}R\mu(x,r)
\end{equation}
with $S \in \mathbb{S}_{++}^{n}$ and $R \in \mathbb{S}_{++}^{m}$, and $\Gamma:\mathcal{Z}\rightarrow \mathbb{R}_{+}$. The choice of function $\Gamma$ (to be defined in the sequel) is of critical importance, since it will ensure the robust tracking control of the original uncertain nonlinear system \eqref{eq:1}. A discount factor $\gamma \in (0,1)$ is required to ensure that \eqref{eq:4} is finite, while $\gamma=1$ can only be applied in tracking control problems characterized by asympotically stable reference dynamics \eqref{eq:2} \cite{b69,b70,b71}.
\begin{assumption}
The nominal system \eqref{eq:3} is controllable on $\mathcal{X}$. For the function $\Gamma(\cdot,\cdot,\cdot)$, it holds that $\Gamma(0,0,0)=0$.
\end{assumption}
We consider the case where the mathematical expressions of the system and reference dynamics \eqref{eq:1}-\eqref{eq:3} are unknown, although their values can be observed through simulations and experiments. To this end, we employ the Q-function formulation by defining $Q:\mathcal{Z}\rightarrow \mathbb{R}_{+}$ \cite{b72}. 
\begin{assumption}
$Q(\cdot,\cdot,\cdot)$ is a twice continuously differentiable function on $\mathcal{Z}$.
\end{assumption}
The optimal control policy $\mu^{\star}(x,r)$ that minimizes \eqref{eq:4} can be derived by firstly computing the optimal Q-function $Q^{\star}$, which satisfies the Bellman optimality equation \cite{b72,b73}
\begin{align}
\label{eq:6}
Q^{\star}(x_k,r_k,a_k)=&l(x_k,r_k,a_k)+\Gamma(x_k,r_k,a_k)\nonumber \\
&+\gamma Q^{\star}\big(x_{k+1},r_{k+1},\mu^{\star}(x_{k+1},r_{k+1})\big)
\end{align} 
where $a_k\in \mathcal{U}$. The optimal state-reference feedback control policy $\mu^{\star}$ is then computed by
\begin{equation}
\label{eq:7}
\mu^{\star}(x,r) = \underset{u}{\mathrm{argmin}}\enspace Q^{\star}(x,r,u).
\end{equation}
\begin{theorem}
Let Assumptions 1 to 3 hold. Assume that there exists a positive definite solution $Q^{\star}$ of \eqref{eq:6} with $\mu^{\star}(x,r)$ given by \eqref{eq:7} and $\mu^{\star}(0,0)=0$. Let also define the tracking error $e_k=x_k-r_k$ and 
\begin{align}
\label{eq:8}
\Gamma(x_k,r_k,a_k) =& \rho^{2} \Delta^{2}(x_k)+\frac{1}{4}\gamma \big\| \nabla_x Q^{\star}\big(x_{k+1},r_{k+1},\nonumber \\
&\mu^{\star}(x_{k+1},r_{k+1})\big) \big\|^{2}_{2}
\end{align}
where $\nabla_x Q^{\star}:\mathcal{Z}\rightarrow \mathbb{R}^{n}$ is the gradient of $Q^{\star}$ with $\nabla_x Q^{\star}(0,0,0)=0_{n\times 1}$, $x_{k+1} = f(x_k,a_k)$ and $\rho\geq 1$. Then, for all time steps $s\geq k+1:$
\begin{itemize}
\item[1)] If $\gamma=1$, $\mu^{\star}(x,r)$ can make the tracking error $e_s$ for the nominal system \eqref{eq:3} locally asymptotically stable. Otherwise, $\mu^{\star}(x,r)$ can make $e_s$ for \eqref{eq:3} sufficiently small by setting $\gamma$ sufficiently close to $1$.
\item[2)] Let the following condition
\begin{align}
\label{eq:9}
\rho^{2} \Delta^{2}(x_s) \geq & \gamma \big\|d(x_s)\big\|^{2}_{2}+\frac{1}{2}\gamma [d(x_s)]^{T}\nabla_{xx}Q^{\star}\big(x_{s+1},r_{s+1}, \nonumber \\
&\mu^{\star}(x_{s+1},r_{s+1})\big)d(x_s)
\end{align}
hold, where $\nabla_{xx}Q^{\star}:\mathcal{Z}\rightarrow \mathbb{R}^{n\times n}$ is the Hessian of $Q^{\star}$ and $x_{s+1}=f\big(x_s,\mu^{\star}(x_s,r_s)\big)$. Then, if $\gamma=1$, $\mu^{\star}(x,r)$ can make the tracking error $e_s$ for the uncertain system \eqref{eq:1} locally asymptotically stable. Otherwise, $\mu^{\star}(x,r)$ can make $e_s$ for \eqref{eq:1} sufficiently small by setting $\gamma$ sufficiently close to $1$.
\end{itemize}
\end{theorem}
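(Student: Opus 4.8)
The plan is to use the optimal value function $V^{\star}(x,r):=Q^{\star}\big(x,r,\mu^{\star}(x,r)\big)$ as a Lyapunov candidate for the tracking error and to rely on two elementary facts. First, evaluating the Bellman optimality equation \eqref{eq:6} along the nominal closed loop driven by $\mu^{\star}$ gives the one-step identity $\gamma V^{\star}(x_{k+1},r_{k+1})-V^{\star}(x_k,r_k)=-l(x_k,r_k,\mu^{\star})-\Gamma(x_k,r_k,\mu^{\star})$; since $\Gamma\ge 0$ and $l(x,r,\mu)\ge\lambda_{\min}(S)\|x-r\|_2^{2}$, this already yields $\gamma V^{\star}(x_{k+1},r_{k+1})-V^{\star}(x_k,r_k)\le-\lambda_{\min}(S)\|e_k\|_2^{2}$. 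Second, since $\mu^{\star}$ is the minimizer in \eqref{eq:7}, $V^{\star}(x,r)\le Q^{\star}(x,r,u)$ for every $u$; I will use this sub-optimality inequality to feed the derivatives of $Q^{\star}$ that appear in $\Gamma$ and in \eqref{eq:9} directly into the estimate, avoiding any appeal to the envelope theorem. I will also use that $V^{\star}\ge 0$ with $V^{\star}(0,0)=0$, that $V^{\star}(x,r)\ge\lambda_{\min}(S)\|x-r\|_2^{2}$ (again from \eqref{eq:6}), and that $\overline{V}:=\max_{\mathcal{X}^{2}}V^{\star}<\infty$ by continuity of $V^{\star}$ on the compact set $\mathcal{X}^{2}$.

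\emph{Part 1 (nominal system).} This follows directly from the one-step identity. For $\gamma=1$ it reads $V^{\star}(x_{k+1},r_{k+1})-V^{\star}(x_k,r_k)\le-\lambda_{\min}(S)\|e_k\|_2^{2}$, a strict decrease off $e_k=0$, so $V^{\star}$ is a Lyapunov function and a standard discrete-time Lyapunov/LaSalle argument gives local asymptotic stability of $e_s$ for $s\ge k+1$. For $\gamma\in(0,1)$ I would write instead $V^{\star}(x_{k+1},r_{k+1})-V^{\star}(x_k,r_k)\le-\lambda_{\min}(S)\|e_k\|_2^{2}+(1-\gamma)\overline{V}$, so that $V^{\star}$ strictly decreases outside $\{\,\|e\|_2^{2}\le(1-\gamma)\overline{V}/\lambda_{\min}(S)\,\}$; hence $e_s$ is ultimately bounded by a ball of radius $O\big(\sqrt{1-\gamma}\,\big)$, which is arbitrarily small for $\gamma$ close to $1$.

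\emph{Part 2 (uncertain system).} Here I would write the uncertain closed loop as $x_{s+1}=f\big(x_s,\mu^{\star}(x_s,r_s)\big)+d(x_s)$ and set $\hat{x}_{s+1}:=f\big(x_s,\mu^{\star}(x_s,r_s)\big)$, which is precisely the state at which $\nabla_x Q^{\star}$ and $\nabla_{xx}Q^{\star}$ are evaluated in \eqref{eq:8}--\eqref{eq:9}. The Bellman identity at the actual state $x_s$ gives $\gamma V^{\star}(\hat{x}_{s+1},r_{s+1})-V^{\star}(x_s,r_s)=-l(x_s,r_s,\mu^{\star})-\Gamma(x_s,r_s,\mu^{\star})$. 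Using the sub-optimality inequality with $u=\mu^{\star}(\hat{x}_{s+1},r_{s+1})$ and then a second-order Taylor expansion of $Q^{\star}$ in its first argument about $\hat{x}_{s+1}$ (with reference and action fixed), one obtains
\begin{align*}
V^{\star}(x_{s+1},r_{s+1})\le{}&V^{\star}(\hat{x}_{s+1},r_{s+1})+\big[\nabla_x Q^{\star}\big]^{T}d(x_s)\\
&+\tfrac{1}{2}[d(x_s)]^{T}\nabla_{xx}Q^{\star}d(x_s)+\mathcal{R}_s ,
\end{align*}
with $\mathcal{R}_s=o\big(\|d(x_s)\|_2^{2}\big)$ near the origin by continuity of $\nabla_{xx}Q^{\star}$ (Assumption~3). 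Multiplying by $\gamma$, subtracting $V^{\star}(x_s,r_s)$, inserting the Bellman identity, and using the Young-type bound $\gamma\big[\nabla_x Q^{\star}\big]^{T}d\le\tfrac{\gamma}{4}\|\nabla_x Q^{\star}\|_2^{2}+\gamma\|d\|_2^{2}$, the term $\tfrac{\gamma}{4}\|\nabla_x Q^{\star}\|_2^{2}$ is cancelled exactly by the matching term of $\Gamma$ in \eqref{eq:8}, leaving the right-hand side equal to $-l(x_s,r_s,\mu^{\star})-\rho^{2}\Delta^{2}(x_s)+\gamma\|d(x_s)\|_2^{2}+\tfrac{\gamma}{2}[d(x_s)]^{T}\nabla_{xx}Q^{\star}d(x_s)+\gamma\mathcal{R}_s$. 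Condition \eqref{eq:9} is precisely what forces $-\rho^{2}\Delta^{2}(x_s)+\gamma\|d(x_s)\|_2^{2}+\tfrac{\gamma}{2}[d(x_s)]^{T}\nabla_{xx}Q^{\star}d(x_s)\le 0$, so $\gamma V^{\star}(x_{s+1},r_{s+1})-V^{\star}(x_s,r_s)\le-\lambda_{\min}(S)\|e_s\|_2^{2}+\gamma\mathcal{R}_s$, i.e.\ the estimate of Part 1 up to a remainder that is dominated near the origin; the conclusions for \eqref{eq:1} then follow exactly as in Part 1, both for $\gamma=1$ and for $\gamma\in(0,1)$.

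The step I expect to be the main obstacle is the exact bookkeeping that makes the designed term of $\Gamma$ and the inequality \eqref{eq:9} dovetail with the cross term produced by Young's inequality and the quadratic disturbance term from the Taylor expansion; this is what pins down the particular shape of \eqref{eq:8}--\eqref{eq:9}. The second, more technical, difficulty --- and the reason the statements are only \emph{local}, and only ``sufficiently small'' when $\gamma<1$ --- is the rigorous control of the Taylor remainder: since $Q^{\star}$ is only twice continuously differentiable, the exact second-order remainder carries a Hessian at an intermediate point rather than at $\hat{x}_{s+1}$, so one must restrict to a neighbourhood of the origin where, by uniform continuity of $\nabla_{xx}Q^{\star}$ together with $d(0)=\Delta(0)=0$, $\mathcal{R}_s$ is dominated by $-\lambda_{\min}(S)\|e_s\|_2^{2}$; and for $\gamma<1$ one must additionally verify that $\overline{V}$ stays uniformly bounded as $\gamma\uparrow 1$, which is where the controllability of the nominal system (Assumption~2) and, in the limiting case $\gamma=1$, the stability of the reference dynamics \eqref{eq:2} come in.
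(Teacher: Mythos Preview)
Your proposal is correct and follows essentially the same route as the paper: use $V^\star(x,r)=Q^\star(x,r,\mu^\star(x,r))$ as a Lyapunov candidate, obtain the nominal decrease directly from the Bellman identity, and for the uncertain system Taylor-expand in $x$ around the nominal successor and complete the square (your Young's inequality) so that the $\tfrac{\gamma}{4}\|\nabla_x Q^\star\|_2^2$ term in $\Gamma$ cancels the cross term, leaving condition \eqref{eq:9} to absorb what remains. The paper's proof differs only cosmetically: it takes $\gamma^s V^\star$ as the Lyapunov candidate (so the discounted case is handled by citing \cite{b88} rather than by your explicit $(1-\gamma)\overline{V}$ bound), and it Taylor-expands $V^\star$ directly and simply drops the higher-order remainder, whereas your sub-optimality step $V^\star(x_{s+1},r_{s+1})\le Q^\star\big(x_{s+1},r_{s+1},\mu^\star(\hat{x}_{s+1},r_{s+1})\big)$ is a slightly cleaner way to land exactly on the partial derivatives $\nabla_x Q^\star$, $\nabla_{xx}Q^\star$ that appear in \eqref{eq:8}--\eqref{eq:9}.
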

\begin{proof}
See Appendix A.
\end{proof}
\textit{Remark 1:} Theorem 1 introduces an explicit functional form for $\Gamma$, based on which the control policy $\mu^{\star}$ ensures optimal tracking control and a sufficiently small tracking error for the nominal nonlinear system \eqref{eq:3}. Furthermore, if condition \eqref{eq:9} additionally holds, then $\mu^{\star}$ can also guarantee a sufficiently small tracking error for the uncertain nonlinear system \eqref{eq:1}. We note that condition \eqref{eq:9} translates to the computation of a suitable $\rho \geq 1$ such that the condition is satisfied. This makes the resulting optimal tracking control problem parameterized with respect to $\rho$. Moreover, since the mathematical expression of $d$ is unknown, we cannot evaluate condition \eqref{eq:9} directly. To overcome this issue, we proceed with the evaluation of the following condition
\begin{align}
\label{eq:10}
\rho^{2}\Delta^{2}(x_s) \geq & \gamma \Delta^{2}(x_s)+\frac{1}{2}\gamma \big\|\nabla_{xx} Q^{\star}\big(x_{s+1},r_{s+1},\nonumber \\
&\mu^{\star}(x_{s+1},r_{s+1})\big)\big\|_{2}\Delta^{2}(x_s)
\end{align}
for all $s\geq k+1$. It is easy to observe that when \eqref{eq:10} holds, then \eqref{eq:9} also holds since
\begin{align*}
\rho^{2}\Delta^{2}(x_s) \geq & \gamma \Delta^{2}(x_s) +\frac{1}{2}\gamma \big\|\nabla_{xx} Q^{\star}\big(x_{s+1},r_{s+1},\nonumber \\
&\mu^{\star}(x_{s+1},r_{s+1})\big)\big\|_{2}\Delta^{2}(x_s)\nonumber \\
\geq & \gamma \big\|d(x_s)\big\|^{2}_{2} +\frac{1}{2}\gamma \big\|[d(x_s)]^{T}\big\|_{2}\big\|\nabla_{xx} Q^{\star}\big(x_{s+1},\nonumber \\
&r_{s+1},\mu^{\star}(x_{s+1},r_{s+1})\big) \big\|_{2}\big\|d(x_s)\big\|_{2}\nonumber \\
\geq & \gamma \big\|d(x_s)\big\|^{2}_{2} +\frac{1}{2}\gamma \big\|[d(x_s)]^{T} \nabla_{xx} Q^{\star}\big(x_{s+1},\nonumber \\
&r_{s+1},\mu^{\star}(x_{s+1},r_{k+1})\big)d(x_s)\big\|_{2}\nonumber \\
\geq & \gamma \big\|d(x_s)\big\|^{2}_{2} +\frac{1}{2}\gamma [d(x_s)]^{T}\nabla_{xx} Q^{\star}\big(x_{s+1},\nonumber \\
&r_{s+1},\mu^{\star}(x_{s+1},r_{s+1})\big)d(x_s),
\end{align*}
where we utilized the fact that $\big\|d(x)\big\|_{2}\leq \Delta(x)$ and exploited simple norm properties.
\section{A $\lambda$-PI Algorithm for Parameterized Optimal Tracking Control}
In this section, we propose a novel Q-function-based $\lambda$-PI algorithm that solves the parameterized optimal tracking control problem presented in Section III. According to Algorithm 1, starting from a suitable $Q^{0}\geq 0$ (to be discussed in the sequel), the proposed algorithm proceeds with the iterative operations of policy improvement \eqref{eq:11} and policy evaluation \eqref{eq:12}, until convergence of the Q-function. The policy evaluation scheme \eqref{eq:12} introduces an additional parameter $\lambda \in (0,1)$ which provides a balancing weight between the conventional VI and PI algorithms. We note that VI and PI can be obtained by setting $\lambda =0$ and $\lambda =1$ in \eqref{eq:12} for all $i\geq 0$ respectively \cite{b34,b35,b36,b37}.
\begin{algorithm}        
\caption{A $\lambda$-PI algorithm that solves the parameterized optimal tracking control problem.}\label{euclid}
        \begin{algorithmic}[1]
            \State  \textbf{Initialization:} Choose $Q^{0}(x,r,a)\geq 0$. Set $i=0$ and $\tau\geq 0$.
            \State\textbf{Policy Improvement:} 
\begin{equation}
\label{eq:11}
\mu^{i}(x,r) = \underset{u}{\mathrm{argmin}}\enspace Q^{i}(x,r,u).
\end{equation}
            \State \textbf{Policy Evaluation:} Solve for $Q^{i+1}\geq 0$,
\begin{align}
\label{eq:12}
&Q^{i+1}(x,r,a)= l(x,r,a)+ \Gamma^{i}(x,r,a)\nonumber \\
&+\lambda \gamma Q^{i+1}\big(x',r',\mu^{i}(x',r')\big)\nonumber \\
&+(1-\lambda) \gamma Q^{i}\big(x',r',\mu^{i}(x',r')\big),
\end{align}
where $\Gamma^{i}(x,r,a)=\rho^{2}\Delta^{2}(x) +\frac{1}{4}\gamma \big\|\nabla_x Q^{i}\big(x',r',\mu^{i}(x',$ $r')\big)\big\|^{2}_{2}$, $\rho \geq 1$, $\lambda \in (0,1)$, $\gamma \in (0,1]$, $x'=f(x,a)$ and $r'=h(r)$.
             \State \textbf{Termination of Learning Phase:} If $\big\|Q^{i+1}-Q^{i}\big\|_{\infty}\leq \tau$, terminate. Otherwise, set $i=i+1$ and go to Step $2$.
        \end{algorithmic}
    \end{algorithm}

\begin{theorem}
Let Assumptions 1 to 3 hold. Consider the sequences $\{Q^{i}(x,r,a)\}_{i\in \mathbb{N}}$ and $\{\mu^{i}(x,r)\}_{i\in \mathbb{N}_0}$ generated by Algorithm $1$. Assume that the initialization condition
\begin{align}
\label{eq:13}
Q^{0}(x,r,a) \geq & l(x,r,a) +\Gamma^{0}(x,r,a) + \gamma Q^{0}\big(x',r',\mu^{0}(x',r')\big)
\end{align}
holds for all $(x,r,a)\in \mathcal{Z}$ and the following condition
\begin{align}
\label{eq:14}
\big\|\nabla_x Q^{i}\big(x,r,\mu^{i}(x,r)\big)\big\|^{2}_{2} \leq \big\|\nabla_x Q^{i-1}\big(x,r,\mu^{i-1}(x,r)\big) \big\|^{2}_{2}
\end{align}
holds for all $i\geq 1$ and $(x,r)\in \mathcal{X}^{2}$. Then, for all $(x,r,a)\in \mathcal{Z}$
\begin{enumerate}
\item[1)]
\begin{align}
\label{eq:15}
Q^{i+1}(x,r,a) \leq & l(x,r,a) +\Gamma^{i}(x,r,a)+ \nonumber \\
& \gamma Q^{i}\big(x',r',\mu^{i}(x',r')\big) \leq  Q^{i}(x,r,a).
\end{align}
\item[2)] 
\begin{align*}
\lim_{i\to \infty} Q^{i}(x,r,a) &= Q^{\star}(x,r,a),\\
\lim_{i\to \infty} \mu^{i}(x,r) &= \mu^{\star}(x,r).
\end{align*}
\end{enumerate}
\end{theorem}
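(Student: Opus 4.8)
The plan is to prove item 1) by induction on $i$ and then to derive item 2) from the resulting monotonicity together with a limiting argument in the policy-evaluation recursion \eqref{eq:12}. Throughout, abbreviate $W^{i}(x,r,a)=l(x,r,a)+\Gamma^{i}(x,r,a)+\gamma Q^{i}\big(x',r',\mu^{i}(x',r')\big)$, so that \eqref{eq:15} is exactly the sandwich $Q^{i+1}\le W^{i}\le Q^{i}$. As a preliminary, observe that with $Q^{i}\ge 0$ and $\mu^{i}$ held fixed, the right-hand side of \eqref{eq:12} viewed as a map in the unknown $Q^{i+1}$ is a $\|\cdot\|_{\infty}$-contraction of modulus $\lambda\gamma<1$ on the bounded functions over the compact set $\mathcal{Z}$, and it preserves nonnegativity (since $l+\Gamma^{i}\ge 0$); hence $Q^{i+1}$ exists, is unique, nonnegative and bounded, which legitimizes the suprema and infima used below.

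For the base case, $W^{0}\le Q^{0}$ is precisely the initialization hypothesis \eqref{eq:13}. For the inductive step I would first establish $W^{i}\le Q^{i}$ ($i\ge1$): substituting into $W^{i}$ the recursion \eqref{eq:12} that produces $Q^{i}$ and using (a) $\Gamma^{i}\le\Gamma^{i-1}$ — which follows from \eqref{eq:14} because the $\rho^{2}\Delta^{2}$-terms cancel and the gradient terms are evaluated at $(x',r')\in\mathcal{X}^{2}$ — together with (b) $Q^{i}\le Q^{i-1}$ (the previous step), one obtains $Q^{i}-W^{i}\ge(\Gamma^{i-1}-\Gamma^{i})+\gamma\big[Q^{i}\big(x',r',\mu^{i-1}(x',r')\big)-Q^{i}\big(x',r',\mu^{i}(x',r')\big)\big]\ge 0$, the last term being nonnegative because $\mu^{i}$ is greedy for $Q^{i}$. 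Next I would establish $Q^{i+1}\le W^{i}$: subtracting the recursion \eqref{eq:12} that produces $Q^{i+1}$ from the definition of $W^{i}$ gives the identity $W^{i}-Q^{i+1}=\lambda\gamma\big[Q^{i}-Q^{i+1}\big]\big(x',r',\mu^{i}(x',r')\big)$, and since $Q^{i}-Q^{i+1}=(Q^{i}-W^{i})+(W^{i}-Q^{i+1})\ge W^{i}-Q^{i+1}$ by the bound just proved, taking the infimum over $\mathcal{Z}$ yields $\inf_{\mathcal{Z}}(W^{i}-Q^{i+1})\ge\lambda\gamma\inf_{\mathcal{Z}}(W^{i}-Q^{i+1})$, which forces this infimum to be nonnegative because $1-\lambda\gamma>0$. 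This closes the induction and proves \eqref{eq:15}.

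Item 1) shows $\{Q^{i}\}$ is pointwise nonincreasing and bounded below by $0$, hence converges pointwise to some $Q^{\infty}\ge 0$, and the sandwich forces $W^{i}\to Q^{\infty}$ as well. Writing $W^{i}=\mathcal{B}Q^{i}$, where $\mathcal{B}$ denotes the operator defined by the right-hand side of \eqref{eq:6} with $\Gamma$ as in \eqref{eq:8} (greedy action, gradient penalty, one-step look-ahead), and recalling $\mathcal{B}Q^{\star}=Q^{\star}$, the goal is to pass to the limit and conclude $\mathcal{B}Q^{\infty}=Q^{\infty}$. For this I would invoke Dini's theorem (monotone convergence of $C^{2}$ functions on the compact $\mathcal{Z}$ is uniform), stability of the greedy selection under uniform convergence together with uniqueness of the minimizer of $Q^{\infty}(x,r,\cdot)$ (so that $\mu^{i}\to\mu^{\star}$ via \eqref{eq:7}), and convergence of the gradient terms inside $\Gamma^{i}$, where the monotonicity \eqref{eq:14} (the squared gradient norms are nonincreasing, nonnegative, hence convergent) and the standing smoothness hypotheses enter. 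Passing to the limit then shows $Q^{\infty}$ is a nonnegative solution of the Bellman optimality equation \eqref{eq:6}; identifying it with the optimal Q-function $Q^{\star}$ and combining with $\mu^{i}\to\mu^{\star}$ completes the proof.

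The routine parts are the algebraic manipulations in the two inequalities of the induction and the contraction/nonnegativity bookkeeping for the policy-evaluation step. The genuine obstacle is the third paragraph: pointwise — or even uniform — convergence $Q^{i}\to Q^{\infty}$ does not by itself deliver convergence of the gradients $\nabla_{x}Q^{i}$ or of the greedy maps $\mu^{i}$ that appear inside $\Gamma^{i}$ and $W^{i}$, so commuting the limit with $\mathcal{B}$ must be justified through careful use of compactness of $\mathcal{Z}$, the smoothness in Assumption 3 (and its analogue for the iterates), the monotonicity \eqref{eq:14}, and uniqueness of the greedy action; identifying the limiting fixed point with $Q^{\star}$ further relies on the optimal Q-function being the unique admissible solution of \eqref{eq:6}.
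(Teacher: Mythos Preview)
Your proof of part 1) is correct but takes a different route than the paper. The paper first establishes an auxiliary result (Lemma 1) which views the policy-evaluation step \eqref{eq:12} as the fixed point of the inner iteration $\Xi^{j+1}=l+\Gamma^{Q}+\lambda\gamma\,\Xi^{j}(\cdot,\mu)+(1-\lambda)\gamma\,Q(\cdot,\mu)$ started at $\Xi^{0}=Q$, and shows that $\{\Xi^{j}\}$ is nonincreasing and converges to the solution of \eqref{eq:12}. The induction then runs as follows: once $Q^{i}\ge l+\Gamma^{i}+\gamma Q^{i}(\cdot,\mu^{i})$ is established (this is exactly your $W^{i}\le Q^{i}$ step, using \eqref{eq:14} and greediness in the same way), Lemma 1 with $Q=Q^{i}$, $\mu=\mu^{i}$ gives $Q^{i+1}=\lim_{j}\Xi^{j}\le\Xi^{1}=W^{i}$. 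You replace this inner iteration entirely by the identity $W^{i}-Q^{i+1}=\lambda\gamma\,(Q^{i}-Q^{i+1})(\cdot,\mu^{i})$ combined with an infimum argument, which is more elementary and self-contained. The paper's detour pays off elsewhere: Lemma 1 is reused verbatim in the proof of Corollary 1 (the $\lambda$-monotonicity statement), whereas your shortcut is specific to Theorem 2. For part 2), both arguments follow the same sandwich-and-limit pattern; you are simply more explicit than the paper about the need to justify convergence of $\nabla_{x}Q^{i}$ and of the greedy maps $\mu^{i}$ when passing to the limit inside $\Gamma^{i}$ and the minimizer --- the paper writes the limiting equation \eqref{eq:6} directly and appeals to uniqueness of the Bellman solution without detailing these intermediate continuity steps.
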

\begin{proof}
See Appendix B.
\end{proof}
The following corollary reveals the impact of parameter $\lambda \in (0,1)$ to the resulting quality of Q-function estimates.
\begin{crl}
Let Assumptions 1 to 3 hold and $Q(x,r,a)\geq 0$ satisfy
\begin{align}
\label{eq:16}
Q(x,r,a) \geq & l(x,r,a) + \Gamma^{Q}(x,r,a) + \gamma Q\big(x',r',\mu(x',r')\big)
\end{align}
where $\Gamma^{Q}(x,r,a) =\rho^{2}\Delta^{2}(x)+ \frac{1}{4}\gamma \big\|\nabla_x Q\big(x',r',\mu(x',r')\big)\big\|^{2}_{2}$ and $\mu(x,r) =\underset{u}{\mathrm{argmin}}\enspace Q(x,r,u)$. Furthermore, let $0< \lambda_1\leq \lambda_2 < 1$ and assume that there exists a solution $Q_{\Lambda}(x,r,a)$ of the following equation
\begin{align*}
Q_{\Lambda}(x,r,a) =& l(x,r,a) + \Gamma^{Q}(x,r,a) +\Lambda \gamma Q_{\Lambda}\big(x',r',\mu(x',r')\big)\nonumber \\
&+ (1-\Lambda)\gamma Q\big(x',r',\mu(x',r')\big)
\end{align*}
with $\Lambda \in \{\lambda_1,\lambda_2\}$. Then, $Q_{\lambda_2}(x,r,a) \leq Q_{\lambda_1}(x,r,a)$ for all $(x,r,a)\in \mathcal{Z}$.
\end{crl}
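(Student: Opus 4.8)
The plan is to recast the defining relation for $Q_{\Lambda}$ as the fixed-point equation of an affine, monotone, contractive operator and then run two nested monotone-iteration arguments. Introduce the composition operator $\mathcal{P}_{\mu}$ acting on bounded functions $W:\mathcal{Z}\to\mathbb{R}$ by $(\mathcal{P}_{\mu}W)(x,r,a)=W\big(x',r',\mu(x',r')\big)$ with $x'=f(x,a)$, $r'=h(r)$, and for $\Lambda\in(0,1)$ define
\begin{align*}
(\mathcal{T}_{\Lambda}W)(x,r,a) =& l(x,r,a)+\Gamma^{Q}(x,r,a)+\Lambda\gamma(\mathcal{P}_{\mu}W)(x,r,a)\\
&+(1-\Lambda)\gamma(\mathcal{P}_{\mu}Q)(x,r,a).
\end{align*}
Since $l$, $\Gamma^{Q}$ and $\mathcal{P}_{\mu}Q$ are fixed bounded functions on the compact set $\mathcal{Z}$ (using Assumption 1 together with Assumption 3 for boundedness of $\Delta$, $Q$ and $\nabla_{x}Q$), $\mathcal{T}_{\Lambda}$ maps bounded functions to bounded functions, is monotone (its only $W$-dependent term carries the nonnegative factor $\Lambda\gamma$), and satisfies $\|\mathcal{T}_{\Lambda}W_{1}-\mathcal{T}_{\Lambda}W_{2}\|_{\infty}\leq\Lambda\gamma\,\|W_{1}-W_{2}\|_{\infty}$; because $\Lambda<1$ and $\gamma\leq 1$ this is a contraction, so the assumed solution $Q_{\Lambda}$ is its unique bounded fixed point and $\mathcal{T}_{\Lambda}^{n}W_{0}\to Q_{\Lambda}$ in $\|\cdot\|_{\infty}$ for every bounded $W_{0}$.

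First I would establish the auxiliary bound $Q_{\Lambda}\leq Q$ for each $\Lambda\in\{\lambda_{1},\lambda_{2}\}$. Evaluating $\mathcal{T}_{\Lambda}$ at $Q$ collapses the convex combination, $\mathcal{T}_{\Lambda}Q=l+\Gamma^{Q}+\gamma\,\mathcal{P}_{\mu}Q$, which by hypothesis \eqref{eq:16} is $\leq Q$. Starting the iteration from $W_{0}=Q$ and applying the monotonicity of $\mathcal{T}_{\Lambda}$ inductively, the sequence $\mathcal{T}_{\Lambda}^{n}Q$ is nonincreasing; passing to the limit gives $Q_{\Lambda}\leq Q$.

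Next I would compare $Q_{\lambda_{1}}$ and $Q_{\lambda_{2}}$ by applying $\mathcal{T}_{\lambda_{1}}$ to $Q_{\lambda_{2}}$. Using the fixed-point identity satisfied by $Q_{\lambda_{2}}$, a direct subtraction yields
\[
\mathcal{T}_{\lambda_{1}}Q_{\lambda_{2}}-Q_{\lambda_{2}}=(\lambda_{2}-\lambda_{1})\gamma\,\mathcal{P}_{\mu}\big(Q-Q_{\lambda_{2}}\big)\geq 0,
\]
where nonnegativity follows from $\lambda_{2}\geq\lambda_{1}$, the monotonicity of $\mathcal{P}_{\mu}$, and the auxiliary bound $Q_{\lambda_{2}}\leq Q$ from the previous step. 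Hence $\mathcal{T}_{\lambda_{1}}Q_{\lambda_{2}}\geq Q_{\lambda_{2}}$, and iterating $\mathcal{T}_{\lambda_{1}}$ from $W_{0}=Q_{\lambda_{2}}$ produces a nondecreasing sequence converging to the fixed point $Q_{\lambda_{1}}$; therefore $Q_{\lambda_{2}}\leq Q_{\lambda_{1}}$, which is the claim.

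The boundedness/contraction bookkeeping and the monotone-convergence passages are routine and parallel the reasoning behind Theorem 2. The only genuinely delicate point is getting the inequality in the correct direction: naively one might expect a larger $\lambda$ to inflate $Q_{\Lambda}$, and it is precisely the auxiliary estimate $Q_{\Lambda}\leq Q$ — combined with the sign of $\lambda_{2}-\lambda_{1}$ — that forces the correction term $\mathcal{T}_{\lambda_{1}}Q_{\lambda_{2}}-Q_{\lambda_{2}}$ to be nonnegative. I would therefore be careful to prove $Q_{\Lambda}\leq Q$ before attempting the comparison, and to verify that the assumed solution $Q_{\Lambda}$ is bounded so that it may be identified with the Banach fixed point of $\mathcal{T}_{\Lambda}$.
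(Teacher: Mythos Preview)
Your proposal is correct and follows essentially the same route as the paper: the paper also iterates from $Q$ (its Lemma~1 is exactly your monotone contraction argument yielding $Q_{\Lambda}\le Q$) and then uses the identical algebraic identity $(\lambda_{2}-\lambda_{1})\gamma\,\mathcal{P}_{\mu}(Q-Q_{\lambda_{2}})\ge 0$ to conclude. The only cosmetic difference is that the paper compares the two sequences $\Xi_{\lambda_{1}}^{j}$ and $\Xi_{\lambda_{2}}^{j}$ termwise by induction and then passes to the limit, whereas you start the $\mathcal{T}_{\lambda_{1}}$-iteration directly at the fixed point $Q_{\lambda_{2}}$; the ingredients and the key inequality are the same.
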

\begin{proof}
See Appendix C.
\end{proof}
Based on Corollary 2, by utilizing a larger value of $\lambda$, the proposed $\lambda$-PI algorithm can compute improved Q-function estimates.  

\textit{Remark 2:} Conditions \eqref{eq:13} and \eqref{eq:14} are sufficient conditions. According to Theorem 2, since the sequence $\{Q^{i}(x,r,a)\}_{i\in \mathbb{N}_0}$ is monotonically non-increasing, we can exploit the rigorous theoretical properties of convnetional VI-based methods \cite{b36,b72}. To ensure that \eqref{eq:13} holds, it suffices to set $\lambda$ a value either equal or sufficiently close to 0 for $i=0$ and define $Q^{0}(x,r,a)$ as a sufficiently large, twice continuously differentiable, positive definite function. Furthermore, according to Corollary 1, we can then gradually increase the value of $\lambda$ in subsequent iterations $i>0$ to benefit from improved Q-function estimates $Q^{i}(x,r,a)$ and ensure that \eqref{eq:14} holds. Therefore, the proposed algorithm has the potential to ensure monotonic convergence to improved solutions and/or faster speed of convergence compared to conventional VI-based algorithms, as shown on extensive in silico clinical studies for fully-automated, closed-loop glucose regulation in T1DM and T2DM (Section VI). However, we note that an appropriate rate of increase for $\lambda$ is mainly dependent on the complexity of the respective system.
\section{A Data-Driven Implementation}
We now proceed with the derivation of a data-driven, critic-only implementation for Algorithm 1. Towards this direction, we introduce the following linearly parameterized approximation for $Q^{i}(x,r,a)$ on $\mathcal{Z}$
\begin{equation*}
Q^{i}(x,r,a) = [\Phi(x,r,a)]^{T} w^{i}+e^{i}(x,r,a),
\end{equation*}  
where $w^{i} \in \mathbb{R}^{K}$ is the weight vector, $\Phi(x,r,a)\in \mathbb{R}^{K}$ is a linearly independent polynomial basis function vector and $e^{i}(x,r,a)\in \mathbb{R}$ is the related function approximation error. Based on the Stone-Weierstrass Theorem, due to the fact that $\mathcal{Z}$ is compact, it holds that $\lim_{K\to \infty} e^{i}(x,r,a) = 0$ \cite{b74,b75}. However, since $w^{i}$ is unknown, we then set
\begin{equation*}
\hat{Q}^{i}(x,r,a) = [\Phi(x,r,a)]^{T}\hat{w}^{i},
\end{equation*}
where $\hat{w}^{i}\in \mathbb{R}^{K}$ defines an approximation of the weight vector $w^{i}$. Hence, the policy improvement \eqref{eq:11} and policy evaluation \eqref{eq:12} steps in Algorithm 1 are now given by
\begin{equation}
\label{eq:17}
\hat{\mu}^{i}(x,r) = \underset{u}{\mathrm{argmin}}\enspace \hat{Q}^{i}(x,r,u)
\end{equation}
and
\begin{align}
\label{eq:18}
\epsilon^{i+1}(x,r,a) =& \hat{Q}^{i+1}(x,r,a)-l(x,r,a)-\hat{\Gamma}^{i}(x,r,a)\nonumber \\
&-\lambda^{i}\gamma \hat{Q}^{i+1}\big(x',r',\hat{\mu}^{i}(x',r')\big)\nonumber \\
& -(1-\lambda^{i})\gamma\hat{Q}^{i}\big(x',r',\hat{\mu}^{i}(x',r')\big)\nonumber \\
=& \big[\Phi(x,r,a)-\lambda^{i}\gamma \Phi\big(x',r',\hat{\mu}^{i}(x',r')\big)\big]^{T}\hat{w}^{i+1}\nonumber \\
& -l(x,r,a)\nonumber -\hat{\Gamma}^{i}(x,r,a)-(1-\lambda^{i})\gamma \nonumber \\
& \cdot \big[\Phi\big(x',r',\hat{\mu}^{i}(x',r')\big)\big]^{T}\hat{w}^{i}
\end{align} 
respectively, where
\begin{align*}
\hat{\Gamma}^{i}(x,r,a) =& \rho^{2} \Delta^{2}(x)+\frac{1}{4}\gamma \big\|\nabla_{x}\hat{Q}^{i}(x',r',\hat{\mu}^{i}(x',r')\big)\big\|^{2}_{2},
\end{align*}
\begin{algorithm}
        \caption{The proposed R$\lambda$PI-LS algorithm.}\label{euclid}
        \begin{algorithmic}[1]
             \State \textbf{Parameterization:} Set $\rho= 1$.
		 \State \textbf{Initialization:} Choose $\hat{Q}^{0}(x,r,a)\geq 0$ based on Remark 2. Set $i=0$ and $\tau\geq 0$.
               \State \textbf{Policy Improvement:} Compute $\hat{\mu}^{i}(x,r)$ based on \eqref{eq:17}.
                \State \textbf{Adaptive Balancing:} Choose $\lambda^{i}$ based on Remark 2.
             \State  \textbf{Data Collection:} Construct data buffer $S^{i}$ based on \eqref{eq:19}.
             \State \textbf{Policy Evaluation:} Solve \eqref{eq:20} for $\hat{w}^{i+1}$.
\State \textbf{Termination of Learning Phase:}\newline If $\underset{b}{\mathrm{max}} \big|\hat{Q}^{i+1}(x_b,r_b,a_{b})-\hat{Q}^{i}(x_b,r_b,a_{b})\big|> \tau$, set $i=i+1$ and go to Step 3. Otherwise, set $\hat{Q}^{\star} =\hat{Q}^{i+1}$ and $\hat{\mu}^{\star}=\hat{\mu}^{i+1}$, computed by \eqref{eq:17}.
\State \textbf{Robustness Check:} Evaluate conditions of Theorem 1: If $\hat{Q}^{\star}$ is a twice continuously differentiable, positive definite function on $\mathcal{Z}$ and the pair $\{\hat{Q}^{\star}(x,r,a),\hat{\mu}^{\star}(x,r)\}$ with $\mu^{\star}(0,0)=0$ satisfy \eqref{eq:6} and \eqref{eq:10} for a sufficiently large number of time steps $C \in \mathbb{N}$ during the operation of the nominal system (3), return $\hat{\mu}^{\star}$ as the approximate robust controller of the uncertain system \eqref{eq:1}. Otherwise, set $\rho=\rho+1$ and go to Step 2.       
        \end{algorithmic}
    \end{algorithm}
$\lambda^{i}$ is the adaptive balancing parameter according to Remark 2 and $\epsilon^{i+1}(x,r,a)$ is the residual error due to the approximation errors $e^{i}$ on $\hat{Q}^{i}$, $e^{i+1}$ on $\hat{Q}^{i+1}$ and $\nabla_{x} e^{i}$ on $\nabla_{x} \hat{Q}^{i}$. We note that, due to Assumption 3, $Q^{i}$, $\nabla_{x} Q^{i}$ and $\nabla_{xx} Q^{i}$ are bounded on the compact set $\mathcal{Z}$. Therefore, the associated approximation errors $e^{i}$, $\nabla_{x} e^{i}$ and $\nabla_{xx} e^{i}$ are also bounded on $\mathcal{Z}$. \\
According to \eqref{eq:18}, we can compute $\hat{w}^{i+1}$ by gathering data along the operation of the nominal system \eqref{eq:3}. To this end, for all $i\geq 0$, let
\begin{align}
\label{eq:19}
S^{i} =& \big\{x_b,r_b,a_{b},x'_{b},r'_{b},\hat{\mu}^{i}(x'_{b},r'_{b})\big\}_{b=1}^{B}
\end{align}
be a data buffer constructed through interactions with the unknown nominal system \eqref{eq:3} and reference dynamics \eqref{eq:2} with $x'_{b} = f(x_{b},a_{b})$, $r'_{b} = h(r_b)$ and $B\in \mathbb{N}$ defines the size of the buffer. The residual error is then given by
\begin{align*}
\epsilon^{i+1}(x_b,r_b,a_b) =&  [\Phi(x_b,r_b,a_b)-\lambda^{i}\gamma \Phi\big(x'_b,r'_b,\hat{\mu}^{i}(x'_b,r'_b)\big)]^{T} \\
&\cdot \hat{w}^{i+1}-l(x_b,r_b,a_b)-\hat{\Gamma}^{i}(x_b,r_b,a_b) \\
&-(1-\lambda^{i})\gamma [\Phi\big(x'_b,r'_b,\hat{\mu}^{i}(x'_b,r'_b)\big)]^{T}\hat{w}^{i}
\end{align*} 
for $b=1,\ldots,B$. Therefore, the unknown vector $\hat{w}^{i+1}$ can then be computed by minimizing the sum of residual errors
\begin{equation*}
\min \sum_{b=1}^{B} (\epsilon^{i+1}_{b})^{2}.
\end{equation*}
A least-squares implementation scheme is thus given by
\begin{equation}
\label{eq:20}
\hat{w}^{i+1} = \bigg[[\Psi^{i}]^{T}\Psi^{i}\bigg]^{-1}[\Psi^{i}]^{T} z^{i}
\end{equation}
where $z^{i} = \begin{bmatrix} z^{i}_{1} & \cdots & z^{i}_{B}\end{bmatrix}^{T}$, $\Psi^{i} = \begin{bmatrix} \Psi^{i}_{1} & \cdots & \Psi^{i}_{B}\end{bmatrix}^{T}$, $\Psi^{i}_{b} = \Phi(x_{b},r_b,a_{b})-\lambda^{i} \gamma \Phi \big(x'_{b},r'_b,\hat{\mu}^{i}(x'_b,r'_b)\big)$, $z^{i}_{b}= l(x_b,r_b,a_b)+\hat{\Gamma}^{i}(x_b,r_b,a_b) +(1-\lambda^{i})\gamma [\Phi\big(x'_b,r'_b,\hat{\mu}^{i}(x'_b,r'_b)\big)]^{T}\hat{w}^{i}$. Finally, by considering the robust tracking control scheme introduced in Section III, Algorithm 2 shows the overall data-driven, robust $\lambda$-PI algorithm, which we refer to as R$\lambda$PI-LS.\\
\textit{Remark 3:} For all $i$, we note that the least squares scheme \eqref{eq:20} for computing $\hat{w}^{i+1}$ requires the inverse of the matrix $[\Psi^{i}]^{T}\Psi^{i} \in \mathbb{R}^{K\times K}$ to exist. This can be achieved by ensuring $\Psi^{i} \in \mathbb{R}^{B\times K}$ to be full column rank, for instance by applying  randomized or even general policies $a\in \mathcal{U}$ to the system \eqref{eq:3} which are significantly different from $\hat{\mu}^{i}$ (e.g., off-policy learning \cite{b76,b77}).

\section{In Silico Clinical Studies}
We now proceed by evaluating the tracking control capibilities of the proposed R$\lambda$PI-LS algorithm (Algorithm 2) on the problem of fully-automated, insulin-based, closed-loop glucose control. The U.S. FDA-accepted DMMS.R simulation software (v1.2.1) from the Epsilon Group \cite{b78,b79} has been employed to conduct extensive in silico clinical studies for personalized disease management and treatment on representantive virtual T1DM (11 adults, 11 adolescents, 11 children) and T2DM (11 adults) subjects. To replicate a realistic AP system configuration, we utilize commercial CGM and insulin pump settings \cite{b80} (e.g., CGM measurements and rapid-acting insulin infusion are provided every 5 minutes).

\subsection{Algorithmic Configuration}  
The state vector is defined as $x_k =\begin{bmatrix} x_{1,k} & x_{2,k} \end{bmatrix}^{T}$, where $x_{1,k}$ refers to the CGM measurements for the blood glucose concentration at time step $k$ [mg/dL] and $x_{2,k}$ computes the $30$-minute rate of change in blood glucose, i.e., $x_{2,k}=(x_{1,k}-x_{1,k-6})/30$ [mg/dL/min]. For the purpose of effective glucaemic control, we set $r_k=r^{\star}=120$ mg/dL, which is a widely used, clinically validated glucose setpoint \cite{b81,b82}. The control policy $\mu(x,r)$ defines the amount of rapid-acting insulin to be infused to the patient [U/5mins]. Furthermore, by setting $\mathcal{S}=1$ and $R=300$, the stage cost function \eqref{eq:5} becomes $l\big(x_k,r_k,\mu(x_k,r_k)\big)= (x_{1,k}-r_k)^{T}\mathcal{S}(x_{1,k}-r_k)+\mu^{T}(x_k,r_k)R\mu(x_k,r_k)$. The discount factor is given by $\gamma=0.95$, while the threshold value for algorithmic convergence is $\tau=10^{-10}$, the size of the data buffer $B_{\lambda PI}=144$ (which defines $12$-hour data collection intervals), the adaptive balancing parameter $\lambda^{i}= \tanh(0.7\ln(i+1))$ and the non-negative bound function $\Delta(x) = \sqrt{90x^{T}x}$. At the beginning of an in silico clinical trial, the state variable $x_{1,0}$ is initialized within the range $[70,180]$ mg/dL according to a uniform distribution, while for the computation of $x_{2,k}$ we set $x_{1,k-6}=0$ until there are available data to use.\\
The considered family of Q-functions $\hat{Q}^{i}(x,r,a)$ is given by the sum of unique basis elements derived from the polynomial function $z^{T}\hat{W}^{i}z$ for all $i$, where $z =\begin{bmatrix} x_1 & x_2 & x^{2}_1 & x^{2}_{2} & r & r^{2} & a \end{bmatrix}^{T} \in \mathbb{R}^{7}$ and $\hat{W}^{i}\in \mathbb{S}^{7\times 7}$ defines the unknown symmetric weight matrix. Therefore, it yields that $\hat{Q}^{i}(x,r,a) =[\Phi(x,r,a)]^{T}\hat{w}^{i}$, where $\Phi(x,r,a)\in \mathbb{R}^{28}$ and $\hat{w}^{i} \in \mathbb{R}^{28}$. We initialize $\hat{Q}^{0}(x,r,a)$ with a sufficiently large, positive definite function. We underline the fact that the weight element associated with the basis function $a^{2}$ on $\hat{Q}^{0}(x,r,a)$ should hold a significanly larger value (i.e., approximately $10^5\times$ higher) compared to all other weights, so that the initial policy improvement \eqref{eq:11} computes a reasonable insulin policy $\hat{\mu}^{0}(x,r)$. The exploration policy $a_{k}$ is defined as
\begin{align*}
a_{k} = \begin{cases} \hat{\mu}^{i}(x_k,r_k)+n_{k}, \text{ if } i=0,\\
\big(\hat{\mu}^{i}(x_k,r_k)+\hat{\mu}^{i-1}(x_k,r_k)\big)/2+n_{k}, \text{ if } i>0,
\end{cases}
\end{align*}
where $n_{k}$ is a randomized quantity according to a uniform distribution over $[3\cdot 10^{-4},6\cdot 10^{-4}]$. After algorithmic convergence, the robustness conditions (Step 8 in Algorithm 2) are evaluated for $C=576$ time steps, which translates to a total of 2 days. Finally, to fully assess the proposed R$\lambda$PI-LS algorithm, we compare its performance with the LS-based robust VI algorithm (obtained by setting $\lambda^{i}=0$ for all $i$ in Algorithm 2), which is referred to as RVI-LS and utilizes the exact same configuration discussed above.

\subsection{Clinical Simulation Settings and Results}
As part of the conducted in silico studies, we proceed with the computation of clinically validated, glycaemic control metrics \cite{b83} for the respective virtual T1DM and T2DM populations. These include:
\begin{itemize}
\item The mean, minimum and maximum CGM-based glucose concentration,
\item The time spent (as a percentage) in normoglycaemia ($x_{1,k}\in [70, 180]$ mg/dL), mild hypoglycaemia ($x_{1,k} \in [50, 70)$ mg/dL), severe hypoglycaemia ($x_{1,k}<50$ mg/dL), mild hyperglycaemia ($x_{1,k} \in (180, 250]$ mg/dL) and severe hyperglycaemia ($x_{1,k}> 250$ mg/dL),
\item The critical low and high blood glucose indices (referred to as LBGI and HBGI respectively), which express the frequency and magnitude of low and high glucose values respectively \cite{b84,b85},
\item The total administered rapid-acting insulin during a day (TDI), as well as the number of iterations until algorithmic convergence.
\end{itemize}
Simulation results are presented in the [mean value $\pm$ standard deviation] format. In order to execute both R$\lambda$PI-LS and RVI-LS algorithms, we conduct an in silico clinical trial for all considered T1DM and T2DM subjects by following a nominal daily meal and exercise configuration similar to \cite{b86}. The nominal meal configuration includes 6 meals which are scheduled to start daily at $[07\text{:}00, 10\text{:}00, 13\text{:}00, 15\text{:}00, 18\text{:}00, 23\text{:}00]$, contain $[70, 30, 90, 30, 90, 25]$ grams of carbohydrates (CHO) and last $[30, 15, 45, 15, 45, 20]$ minutes respectively. Furthermore, the nominal exercise configuration includes a session which starts at $16\text{:}00$, is characterized by moderate intensity and lasts 30 minutes. To replicate realistic daily lifestyle conditions, we introduce dynamic variability on every meal and exercise occurence of the abovementioned nominal configuration, with a range defined as follows: $1)$ $[-15,15]$ minutes on the meal starting time, $2)$ $[-15\%, 15\%]$ on the amount of CHO, $3)$ $[-15\%,15\%]$ on meal duration, $4)$ $[-15, 15]$ minutes on exercise starting time, $5)$ random selection of the exercise intensity mode from [light, moderate, intense] and $6)$ $[-20\%, 20\%]$ on exercise duration. The employed variability profile follows suitable uniform distributions.
\definecolor{Gray}{rgb}{0.498,0.498,0.498}
\begin{table*}[t]
\centering
\caption{GLYCAEMIC CONTROL PERFORMANCE OF THE RESPECTIVE ALGORITHMS (LEARNING PHASE).}
\definecolor{Silver}{rgb}{0.749,0.749,0.749}
\definecolor{Black}{rgb}{0,0,0}
\scalebox{0.85}{\begin{tblr}{
  cells = {c},
  cell{1}{1} = {c=13}{},
  cell{7}{1} = {c=13}{},
  hlines,
  vlines = {Silver},
  vline{2} = {-}{Black},
}
{\textbf{~ R$\lambda$PI-LS}\\\textbf{~}} &                                                &                                               &                                               &                                                   &                                                &                                                  &                                                 &                                                   &               &               &                                  &                                                            \\
\textbf{~}                       & {\textbf{BG}\\\textbf{mean}\\\textbf{[mg/dL]}} & {\textbf{BG}\\\textbf{min}\\\textbf{[mg/dL]}} & {\textbf{BG}\\\textbf{max}\\\textbf{[mg/dL]}} & {\textbf{\% in}\\\textbf{target}\\\textbf{range}} & {\textbf{\% in}\\\textbf{mild}\\\textbf{hypo}} & {\textbf{\% in}\\\textbf{severe}\\\textbf{hypo}} & {\textbf{\% in}\\\textbf{mild}\\\textbf{hyper}} & {\textbf{\% in}\\\textbf{severe}\\\textbf{hyper}} & \textbf{LBGI} & \textbf{HBGI} & {\textbf{TDI}\\\textbf{[U/day]}} & {\textbf{iterations}\\\textbf{till}\\\textbf{convergence}} \\
{\textbf{T1ADU}\\\textbf{~}}     & 152±8                                          & 83±17                                         & 204±16                                        & 82.6±4.3                                          & 1±0.9                                          & 0±0                                              & 16.4±3.4                                     & 0±0                                               & 0.32±0.18     & 2.09±0.91     & 45.3±9.4                         & 122±12                                                     \\
\textbf{T1ADO}                   & 150±12                                         & 81±15                                         & 201±15                                        & 78.2±5.7                                          & 0.5±0.3                                        & 0±0                                              & 21.3±5.4                                        & 0±0                                               & 0.22±0.19     & 2.31±0.75     & 41.2±9.5                         & 134±14                                                     \\
\textbf{T1CHIL}                  & 146±9                                          & 79±13                                         & 199±12                                        & 77.3±4.2                                          & 0.6±0.4                                        & 0±0                                              & 22.1±3.8                                        & 0±0                                               & 0.29±0.24     & 2.38±0.61     & 33.5±8.4                         & 128±10                                                     \\
\textbf{T2ADU}                   & 142±10                                         & 80±11                                         & 208±17                                        & 82.5±5.1                                          & 0.5±0.5                                        & 0±0                                              & 17±4.6                                          & 0±0                                               & 0.21±0.15     & 2.12±0.65     & 51.3±14.8                        & 106±8                                                      \\
{\textbf{~ RVI-LS}\\\textbf{~}}  &                                                &                                               &                                               &                                                   &                                                &                                                  &                                                 &                                                   &               &               &                                  &                                                            \\
\textbf{T1ADU}                   & 167±17                                         & 65±24                                         & 286±61                                        & {
  68.7±
  \\10.7
  }                            & 4.4±2.1                                        & 2.8±1.3                                          & 15.6±3.9                                        & 8.5±3.4                                           & 1.91±0.92     & 3.92±1.98     & 49.3±11.2                        & 410±32                                                     \\
\textbf{T1ADO}                   & 162±15                                         & 60±22                                         & 275±49                                        & {
  66.4±
  \\11.8
  }                            & 4.1±3.3                                        & 3.3±2.1                                          & 20.1±4.5                                        & 6.1±1.9                                           & 1.84±0.78     & 4.43±2.05     & 44.3±10.4                        & 452±38                                                     \\
\textbf{T1CHIL}                  & 159±13                                         & 62±18                                         & 264±45                                        & {
  62.3±
  \\12.3
  }                            & 7.9±4.4                                        & 2.2±1.5                                          & 22.3±5                                          & 5.3±1.4                                           & 2.08±0.98     & 4.34±2.12     & 37.3±9.8                         & 416±26                                                     \\
\textbf{T2ADU}                   & 150±28                                         & 63±20                                         & 279±55                                        & 70.1±11                                           & 5.2±3.5                                        & 2.8±1.1                                          & 14.6±4.2                                        & 7.3±2.2                                           & 1.94±0.81     & 3.85±1.87     & 55.4±22.4                        & 376±18                                                     
\end{tblr}}
\end{table*} 

\begin{table*}[t]
\centering
\caption{GLYCAEMIC CONTROL PERFORMANCE OF THE RESPECTIVE ALGORITHMS,\\ BY EMPLOYING THE PERSONALIZED, ROBUST INSULIN CONTROL POLICIES.}
\definecolor{Silver}{rgb}{0.749,0.749,0.749}
\definecolor{Black}{rgb}{0,0,0}
\scalebox{0.85}{\begin{tblr}{
  cells = {c},
  cell{1}{1} = {c=12}{},
  cell{7}{1} = {c=12}{},
  hlines,
  vlines = {Silver},
  vline{2} = {-}{Black},
}
{\textbf{~ R$\lambda$PI-LS}\\\textbf{~}} &                                                &                                               &                                               &                                                   &                                                &                                                  &                                                  &                                                   &               &               &                                  \\
\textbf{~}                       & {\textbf{BG}\\\textbf{mean}\\\textbf{[mg/dL]}} & {\textbf{BG}\\\textbf{min}\\\textbf{[mg/dL]}} & {\textbf{BG}\\\textbf{max}\\\textbf{[mg/dL]}} & {\textbf{\% in}\\\textbf{target}\\\textbf{range}} & {\textbf{\% in}\\\textbf{mild}\\\textbf{hypo}} & {\textbf{\% in}\\\textbf{severe}\\\textbf{hypo}} & {\textbf{\% in}\\\textbf{mild }\\\textbf{hyper}} & {\textbf{\% in}\\\textbf{severe}\\\textbf{hyper}} & \textbf{LBGI} & \textbf{HBGI} & {\textbf{TDI}\\\textbf{[U/day]}} \\
{\textbf{T1ADU}\\\textbf{~}}     & 139±12                                         & 90±15                                         & 200±17                                        & 88.8±4.6                                          & 0.4±0.1                                        & 0±0                                              & 10.8±4.5                                         & 0±0                                               & 0.25±0.12     & 1.41±0.45     & 49.2±12.4                        \\
\textbf{T1ADO}                   & 144±13                                         & 86±13                                         & 195±14                                        & 86.6±4.4                                          & 0.6±0.3                                        & 0±0                                              & 12.8±4.1                                         & 0±0                                               & 0.29±0.19     & 1.55±0.41     & 45.2±11.5                        \\
\textbf{T1CHIL}                  & 145±15                                         & 88±14                                         & 193±11                                        & 85.9±5.3                                          & 0.3±0.2                                        & 0±0                                              & 13.8±5.1                                         & 0±0                                               & 0.18±0.14     & 1.62±0.54     & 39.3±12.1                        \\
\textbf{T2ADU}                   & 137±15                                         & 85±12                                         & 204±18                                        & 87.1±6                                            & 0.7±0.5                                        & 0±0                                              & 12.2±5.5                                         & 0±0                                               & 0.36±0.28     & 1.49±0.76     & 56.3±15.8                        \\
{\textbf{~ RVI-LS}\\~
  }        &                                                &                                               &                                               &                                                   &                                                &                                                  &                                                  &                                                   &               &               &                                  \\
\textbf{T1ADU}                   & 149±14                                         & 75±23                                         & 255±54                                        & 78.6±6.8                                          & 2.3±1.8                                        & 0.8±0.3                                          & 10.9±3.1                                         & 7.4±1.6                                           & 1.31±0.6      & 2.85±1.14     & 53.3±15.8                        \\
\textbf{T1ADO}                   & 154±11                                         & 70±21                                         & 264±41                                        & 77±7.7                                            & 3.5±1.6                                        & 1.4±0.6                                          & 12.2±4.4                                         & 5.9±1.1                                           & 1.45±0.85     & 2.94±0.85     & 49.2±13.4                        \\
\textbf{T1CHIL}                  & 158±16                                         & 73±19                                         & 249±42                                        & 75.9±7.4                                          & 2.9±1.3                                        & 1.8±0.9                                          & 14.2±3.8                                         & 5.2±1.4                                           & 1.38±0.75     & 3.37±0.91     & 44.3±10.4                        \\
\textbf{T2ADU}                   & 152±39                                         & 71±25                                         & 261±50                                        & 79.4±7.6                                          & 4.1±2                                          & 1.2±0.5                                          & 9.1±3.2                                          & 6.2±1.9                                           & 1.61±0.93     & 2.81±1.05     & 59.9±25.5                        
\end{tblr}}
\end{table*} 
\begin{figure*}[h!]
\centering
\begin{multicols}{2}
  \hspace*{-1.2cm}\includegraphics[width=1.1\linewidth]{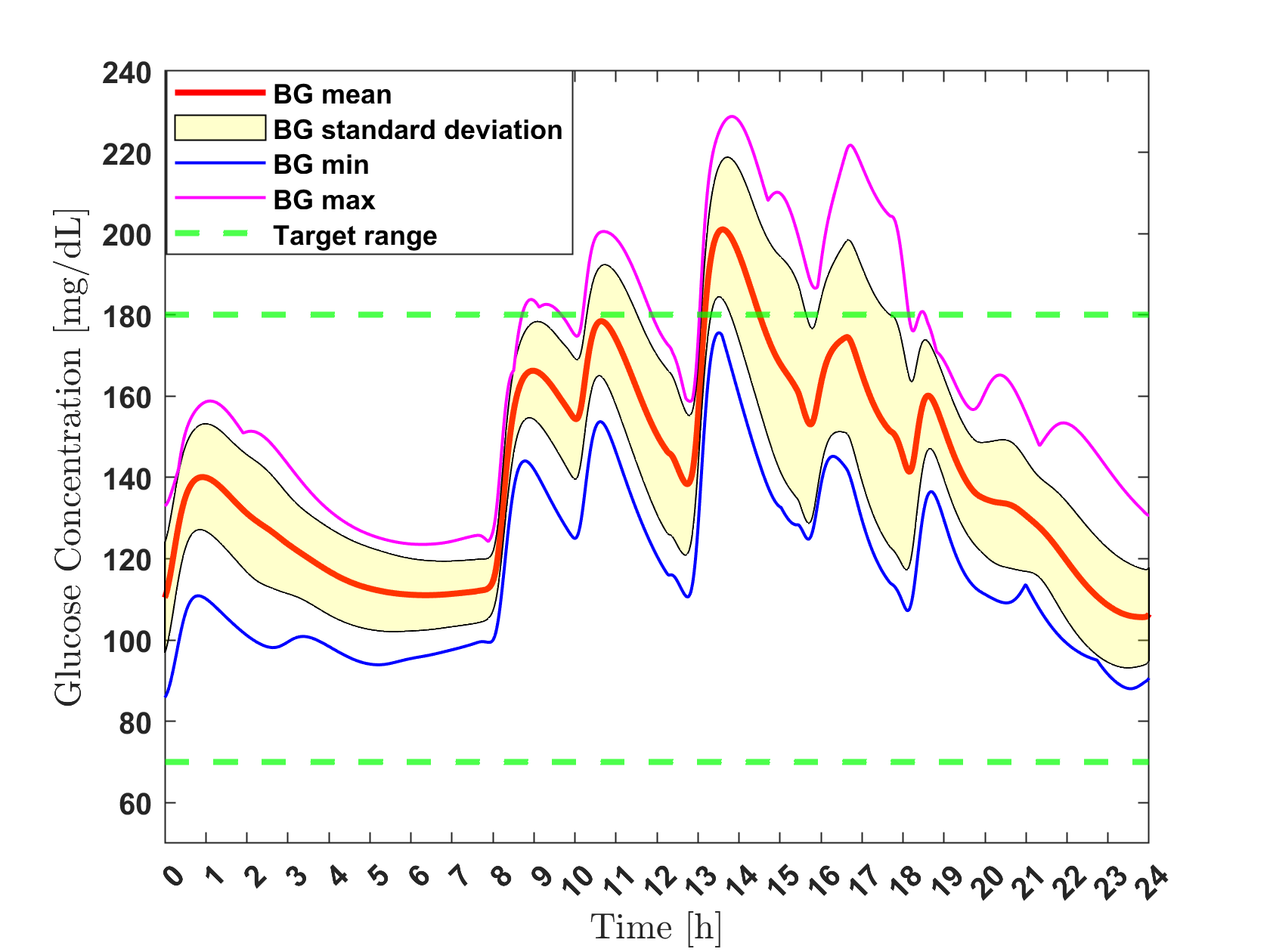}\par 
 \hspace*{-0.4cm} \includegraphics[width=1.1\linewidth]{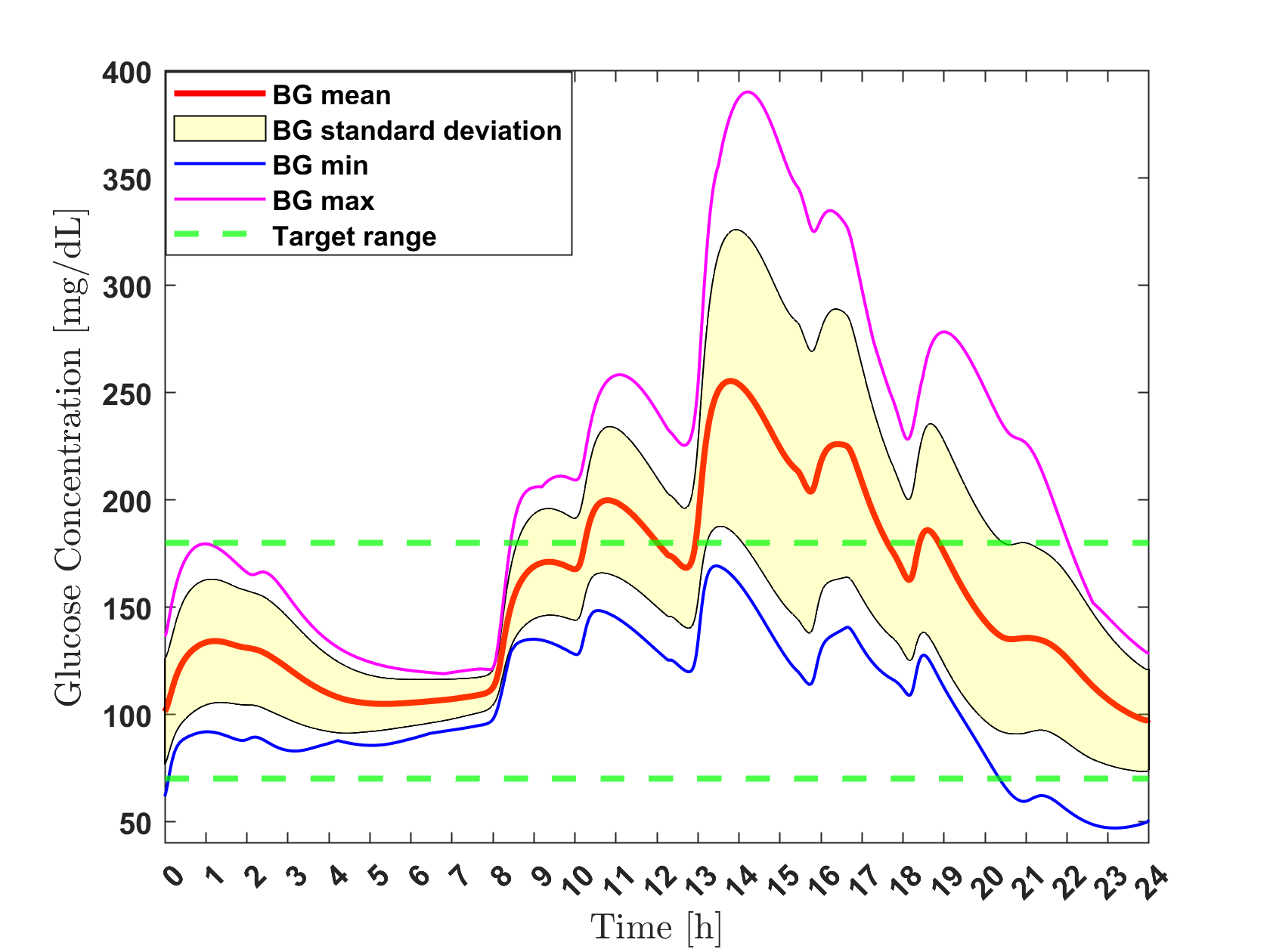}\par 
    \end{multicols}
\caption{Glycaemic behavior achieved by R$\lambda$PI-LS (left) and RVI-LS (right) algorithms, by employing the respective robust insulin policies to the T1DM adult population, during a day of a conducted in silico trial.}
\end{figure*}
Table I presents the results of the conducted in silico clinical studies for the virtual populations of T1DM adults (T1ADU), adolescents (T1ADO), children (T1CHIL) and T2DM adults (T2ADU). These results refer to the Learning Phase of the respective algorithms (i.e., until algorithmic convergence), for a value $\rho$ that satisfies the robustness conditions (Step 8 of Algorithm 2), that is $\rho_{\lambda PI}=7$ and $\rho_{VI}=11$. It is evident that R$\lambda$PI-LS converges significantly faster compared to RVI-LS (mean range of $106-134$ iterations compared to $376-452$ iterations for RVI-LS), which is also reflected on the required number of days for convergence (mean range of $53-67$ days compared to $188-226$ days for RVI-LS). Furthermore, R$\lambda$PI-LS enables superior glycaemic behavior for the entire duration of the Learning Phase, since it achieves significantly higher percentages of time spent in the target normoglycaemia (mean range of $77.3\%-82.6\%$ compared to $62.3\%-70.1\%$ for RVI-LS), significantly lower percentages of time spent in mild hypoglycaemia (mean range of $0.5\%-1\%$ compared to $4.1\%-7.9\%$ for RVI-LS), with no time spent in severe hypoglycaemia (compared to the mean range of $2.2\%-3.3\%$ for RVI-LS) and severe hyperglycaemia (compared to the mean range of $5.3\%-8.5\%$ for RVI-LS). The outstanding performance of R$\lambda$PI-LS is additionally shown based on the significantly lower values of LBGI (mean range of $0.21-0.32$ compared to $1.84-2.08$ for RVI-LS) and HBGI (mean range of $2.09-2.38$ compared to $3.85-4.43$ for RVI-LS). Similar significant improvements are observed in the related blood glucose metrics (BG mean, BG min and BG max).\\
After termination of the both algorithms, we conduct 2,000 in silico clinical trials for all considered T1DM and T2DM populations with a duration of 60 days, by now employing the converged robust insulin policy associated with each virtual subject. To fully evaluate the robustness of the converged control policies, we now consider higher dynamic variability on every meal and exercise occurence of the nominal configuration, with a range defined as follows: $1)$ $[-60,60]$ minutes on the meal starting time, $2)$ $[-50\%,50\%]$ on the amount of CHO, $3)$ $[-50\%,50\%]$ on meal duration, $4)$ $[-60,60]$ minutes on exercise time, $5)$ random selection of the exercise intensity mode from [light, moderate, intense] and $6)$ $[-50\%,50\%]$ on exercise duration. The considered variability profile follows uniform distributions. Table II presents the results. As expected, the conveged robust insulin policies computed by R$\lambda$PI-LS and RVI-LS achieve significantly improved glycaemic control compared to the behavior observed during the Learning Phase. It is clear that R$\lambda$PI-LS accompishes critically better glycaemic control compared to RVI-LS, with significantly higher time spent in the target range (mean range of $85.9\%-88.8\%$ compared to $75.9\%-79.4\%$ for RVI-LS) and significantly less time spent in mild hypoglyceamia (mean range of $0.3\%-0.7\%$ compared to $2.3\%-4.1\%$ for RVI-LS). Furthermore, the robust insulin control policies computed by $R\lambda$PI-LS result in no time spent in severe hypoglycaemia (compared to the mean range of $0.8\%-1.8\%$ for RVI-LS) and severe hyperglycaemia (compared to the mean range of $5.2\%-7.4\%$ for RVI-LS). In a similar fashion, crucial improvements are also observed in the values of LBGI, HBGI and blood glucose metrics (BG mean, BG min and BG max) for both T1DM and T2DM populations.\\
To enable a better understanding of the glycaemic control achieved by the two algorithms, Figure 1 shows the resulting glycaemic behavior achieved by employing the converged robust insulin policies to the T1DM adult population, during a day of a conducted in silico trial. The meal and exercise configuration for that specific day is defined as follows: The meals start at $[07\text{:}30, 09\text{:}45, 12\text{:}15, 15\text{:}25, 17\text{:}40, 23\text{:}45]$, contain $[60, 40, 100, 30, 65, 30]$ grams of carbohydrates (CHO) and last $[40, 25, 45, 25, 35, 15]$ minutes respectively. Furthermore, the exercise session starts at $16\text{:}35$, is characterized by moderate intensity and lasts 30 minutes. Overall, it is evident that R$\lambda$PI-LS achieves superior glycaemic control for both T1DM and T2DM populations, under the existence of completely unannounced, highly uncertain meal and exercise scenarios. 
\section{Conclusion}
In this work, we presented a novel robust reinforcement learning control algorithm, with critical applications to fully-automated drug delivery for personalized medicine. We proposed a novel robust tracking control scheme which can ensure a sufficiently small tracking control error for a discrete-time uncertain nonlinear system. This can be achieved by solving a suitably defined, parameterized optimal tracking control problem. The associated optimal tracking control policy can be computed based on a novel Q-function-based variant of the $\lambda$-PI algorithm.  The proposed algorithm is shown to benefit from rigorous theoretical guarantees and avoid common drawbacks of PI and VI. By utilizing a critic-only LS implementation approach, we evaluate the performance and reliability of the overall data-driven robust learning control algorithm to the challenging problem of fully-automated, insulin-based, closed-loop glucose control for patients diagnosed with T1DM and T2DM. To accomplish this, a U.S. FDA-accepted metabolic simulator is employed to conduct an extensive in silico campaign on completely unannounced meal and exercise settings.\\
As a future work, we aim to theoretically extend and assess the proposed algorithmic framework to the case of uncertain multiplayer nonlinear games, by merging robust reinforcement learning and game theoretical control methods. Furthermore, we plan to proceed with a microcontroller implementation of the derived robust algorithm, with the final goal of producing a prototype that can be evaluated in future clinical trials on real patients.
\appendices
\section{Proof of Theorem 1}
To simplify presentation, we define the following compact notation
\begin{align*}
G^{\mu^{\star}}(x,r) =& G\big(x,r,\mu^{\star}(x,r)\big)\nonumber\\
G^{\star,\mu^{\star}}(x,r) =& G^{\star}\big(x,r,\mu^{\star}(x,r)\big)\nonumber\\
\nabla_x G^{\star,\mu^{\star}}(x,r) =& \nabla_x G^{\star}\big(x,r,\mu^{\star}(x,r)\big)\nonumber\\
\nabla_{xx} G^{\star,\mu^{\star}}(x,r) =& \nabla_{xx} G^{\star}\big(x,r,\mu^{\star}(x,r)\big)
\end{align*}
for generic $G:\mathcal{Z}\rightarrow \mathbb{R}_{+}$, $G^{\star}:\mathcal{Z}\rightarrow \mathbb{R}_{+}$, $\nabla_x G^{\star}:\mathcal{Z}\rightarrow \mathbb{R}^{n}$ and $\nabla_{xx} G^{\star}:\mathcal{Z}\rightarrow \mathbb{R}^{n\times n}$.
\begin{itemize}
\item[1)] We employ $\gamma^{s}Q^{\star,\mu^{\star}}(x_s,r_s)$ as a candidate Lyapunov function. The difference equation can then be written as $D\big(\gamma^{s}Q^{\star,\mu^{\star}}(x_s,r_s)\big)=\gamma^{s+1}Q^{\star,\mu^{\star}}(x_{s+1},r_{s+1})-\gamma^{s}Q^{\star,\mu^{\star}}(x_{s},r_{s})$. Hence,
\begin{align}
\label{eq:21}
D\big(\gamma^{s}Q^{\star,\mu^{\star}}(x_s,r_s)\big)=& -\gamma^{s}\big[l^{\mu^{\star}}(x_s,r_s)+\Gamma^{\mu^{\star}}(x_s,r_s)\big].
\end{align}
It is clear that $D\big(\gamma^{s}Q^{\star,\mu^{\star}}(x_s,r_s)\big) \leq 0$. For the case of $\gamma=1$ (can only be applied if $\lim_{s\to\infty} r_{s}\rightarrow 0$), based on Barbalat's Extension Lemma \cite{b71}, \cite[p. 113]{b87} the states $x$ of  the nominal system \eqref{eq:3} converge in a region where it holds that $\lim_{s\to \infty} D\big(\gamma^{s}Q^{\star,\mu^{\star}}(x_s,r_s)\big) \rightarrow 0$. According to \eqref{eq:21}, local asymptotic stability of the tracking error $e_s$ on $\mathcal{X}$ is therefore achieved for the closed-loop nominal system \eqref{eq:3} under $\mu^{\star}(x,r)$, meaning that $\lim_{s\to\infty} x_{s}\rightarrow 0$. Otherwise, following \cite{b88}, the tracking error $e_s$ can be made sufficiently small for \eqref{eq:3} by utilizing a value of $\gamma$ sufficiently close to 1.
\item[2)] Similar to the proof for part 1), we use $\gamma^{s}Q^{\star,\mu^{\star}}(x_s,r_s)$ as a candidate Lyapunov function. Then, the difference equation can be written as $\tilde{D}\big(\gamma^{s}Q^{\star,\mu^{\star}}(x_s,r_s)\big)=\gamma^{s+1}Q^{\star,\mu^{\star}}(\tilde{x}_{s+1},r_{s+1})- \gamma^{s}Q^{\star,\mu^{\star}}(x_{s},r_{s})$ where $\tilde{x}_{s+1}=f\big(x_s,\mu^{\star}(x_s,r_s)\big)+d(x_s)$. By taking the second-order Taylor approximation of $Q^{\star,\mu^{\star}}(\tilde{x}_{s+1},r_{s+1})$ at the operating point $x_{s+1}=f\big(x_s,\mu^{\star}(x_s,r_s)\big)$ related to the nominal system \eqref{eq:3}, the difference equation yields 
\begin{align}
\label{eq:22}
&\tilde{D}\big(\gamma^{s}Q^{\star,\mu^{\star}}(x_s,r_s)\big) =\nonumber \\ &\gamma^{s}\bigg[-(x_s-r_s)^{T}S(x_s-r_s) -[\mu^{\star}(x_s,r_s)]^{T}R\mu^{\star}(x_s,r_s)\nonumber \\
&-\rho^{2} \Delta^{2}(x_s)-\frac{1}{4}\gamma \big \| \nabla_x Q^{\star,\mu^{\star}}(x_{s+1},r_{s+1})\big \|^{2}_{2}\nonumber \\
&+ \gamma \big[\nabla_x Q^{\star,\mu^{\star}}(x_{s+1},r_{s+1})\big]^{T}d(x_s) \nonumber \\
&+\frac{1}{2}\gamma [d(x_s)]^{T}\nabla_{xx} Q^{\star,\mu^{\star}}(x_{s+1},r_{s+1})d(x_s)\bigg]
\end{align}
for which it is assumed that the third and high-order terms are small and can be neglected, similar to \cite{b10,b14,b15}. By adding and subtracting the term $\gamma^{s+1}\big\|d(x_s)\big\|^{2}_{2}$ in the right hand side of \eqref{eq:22}, we get that
\begin{align}
\label{eq:23}
&\tilde{D}\big(\gamma^{s}Q^{\star,\mu^{\star}}(x_s,r_s)\big)=\nonumber \\
& \gamma^{s}\bigg[-(x_s-r_s)^{T}S(x_s-r_s) -[\mu^{\star}(x_s,r_s)]^{T}R\mu^{\star}(x_s,r_s)\nonumber \\
&-\gamma \big[\frac{1}{2}\nabla_x Q^{\star,\mu^{\star}}(x_{s+1},r_{s+1})-d(x_s)\big]^{T}\nonumber \\
& \cdot \big[\frac{1}{2}\nabla_x Q^{\star,\mu^{\star}}(x_{s+1},r_{s+1})-d(x_s)\big]\nonumber \\
& -\big[\rho^{2} \Delta^{2}(x_s)- \gamma \big\|d(x_s)\big\|^{2}_{2}\nonumber \\
&-\frac{1}{2}\gamma [d(x_s)]^{T}\nabla_{xx} Q^{\star,\mu^{\star}}(x_{s+1},r_{s+1})d(x_s)\big]\bigg].
\end{align}
If condition \eqref{eq:14} holds, then $\tilde{D}\big(\gamma^{s}Q^{\star,\mu^{\star}}(x_s,r_s)\big)\leq 0$. Similar to the proof for part 1), for the case $\gamma=1$ (can only be utilized if $\lim_{s\to\infty} r_{s}\rightarrow 0$), based on Barbalat's Extension Lemma \cite{b71}, \cite[p. 113]{b87}, the states $x$ of the uncertain system \eqref{eq:1} converge in a region where it holds that $\lim_{s\to \infty} \tilde{D}\big(\gamma^{s}Q^{\star,\mu^{\star}}(x_s,r_s)\big) \rightarrow 0$. Based on \eqref{eq:23}, local asymptotic stability of the tracking error $e_s$ on $\mathcal{X}$ is therefore achieved for the closed-loop uncertain system \eqref{eq:1} under $\mu^{\star}(x,r)$, i.e., $\lim_{s\to\infty} x_{s}\rightarrow 0$. Otherwise, by following the theoretical results in \cite{b88}, the tracking error $e_s$ can be made sufficiently small for \eqref{eq:1} by using a value of $\gamma$ sufficiently close to 1.
\end{itemize}

\section{Proof of Theorem 2}
Before proceeding with the proof of Theorem 2, we first prove the following important result.
\begin{lemma}
Let Assumptions 1 to 3 hold and $Q(x,r,a)\geq 0$ satisfy
\begin{equation}
\label{eq:24}
Q(x,r,a) \geq  l(x,r,a) + \Gamma^{Q}(x,r,a) + \gamma Q\big(x',r',\mu(x',r')\big),
\end{equation}
where $\Gamma^{Q}(x,r,a) =\rho^{2}\Delta^{2}(x)$$ + \frac{1}{4}\gamma \big\|\nabla_x Q\big(x',r',\mu(x',r')\big)\big\|^{2}_{2}$ and $\mu(x,r) =\underset{u}{\mathrm{argmin}}\enspace Q(x,r,u)$. Furthermore, consider the sequence $\{\Xi^{j}(x,r,a)\}_{j\in \mathbb{N}}$ generated by 
\begin{align}
\label{eq:25}
&\Xi^{j+1}(x,r,a) = l(x,r,a) + \Gamma^{Q}(x,r,a) \nonumber \\
&+\lambda \gamma \Xi^{j}\big(x',r',\mu(x',r')\big)+ (1-\lambda)\gamma Q\big(x',r',\mu(x',r')\big)
\end{align}
with $\Xi^{0}(x,r,a) = Q(x,r,a)$, Moreover, let $\underline{Q}(\cdot,\cdot,\cdot)$ be the solution of
\begin{align}
\label{eq:26}
&\underline{Q}(x,r,a) = l(x,r,a) + \Gamma^{Q}(x,r,a) \nonumber \\
&+\lambda \gamma \underline{Q}\big(x',r',\mu(x',r')\big)+ (1-\lambda)\gamma Q\big(x',r',\mu(x',r')\big).
\end{align}
Then, for all $(x,r,a)\in \mathcal{Z}$
\begin{itemize}
\item[1)] $\Xi^{j+1}(x,r,a) \leq \Xi^{j}(x,r,a)$.
\item [2)] $\lim_{j \to \infty} \Xi^{j}(x,r,a) = \underline{Q}(x,r,a)$.
\end{itemize}
\end{lemma}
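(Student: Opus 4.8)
The plan is to exploit the fact that in the iteration \eqref{eq:25} both the policy $\mu$ and the function $\Gamma^{Q}$ are \emph{frozen} (they depend only on the given $Q$, not on the running iterate $\Xi^{j}$), so that \eqref{eq:25} is merely an affine fixed-point iteration. Concretely, I would introduce the operator $T$ acting on bounded real-valued functions on $\mathcal{Z}$ by $(T\Xi)(x,r,a) = l(x,r,a) + \Gamma^{Q}(x,r,a) + \lambda\gamma\,\Xi\big(x',r',\mu(x',r')\big) + (1-\lambda)\gamma\,Q\big(x',r',\mu(x',r')\big)$, so that \eqref{eq:25} reads $\Xi^{j+1} = T\Xi^{j}$ with $\Xi^{0}=Q$, and $\underline{Q}$ is by \eqref{eq:26} a fixed point of $T$. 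Since $Q$ is twice continuously differentiable and $\mathcal{Z}$ is compact (Assumptions 1 and 3), the maps $l$, $Q$, $\nabla_{x}Q$ and $\Delta$ are bounded on $\mathcal{Z}$; hence $\Gamma^{Q}$ is bounded and $T$ maps the Banach space of bounded functions on $\mathcal{Z}$ (with the $\|\cdot\|_{\infty}$ norm) into itself, with $\Xi^{0}=Q$ in that space.

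For part 1) I would argue by induction on $j$. Base case: substituting $\Xi^{0}=Q$ into \eqref{eq:25} collapses the last two terms into $\gamma Q\big(x',r',\mu(x',r')\big)$, so $\Xi^{1}(x,r,a)=l(x,r,a)+\Gamma^{Q}(x,r,a)+\gamma Q\big(x',r',\mu(x',r')\big)\le Q(x,r,a)=\Xi^{0}(x,r,a)$ by the standing hypothesis \eqref{eq:24}. Inductive step: subtracting \eqref{eq:25} at index $j$ from \eqref{eq:25} at index $j+1$ cancels the common terms $l$, $\Gamma^{Q}$ and $(1-\lambda)\gamma Q(\cdot')$, leaving $\Xi^{j+1}(x,r,a)-\Xi^{j}(x,r,a)=\lambda\gamma\big[\Xi^{j}\big(x',r',\mu(x',r')\big)-\Xi^{j-1}\big(x',r',\mu(x',r')\big)\big]$; since $\lambda\gamma\ge 0$ and the bracket is $\le 0$ by the induction hypothesis, monotonicity propagates.

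For part 2) the same cancellation, applied to two arbitrary iterates, shows $\|T\Xi_{1}-T\Xi_{2}\|_{\infty}\le\lambda\gamma\,\|\Xi_{1}-\Xi_{2}\|_{\infty}$, and since $\lambda\in(0,1)$ and $\gamma\in(0,1]$ give $\lambda\gamma<1$, the operator $T$ is a contraction on the above Banach space. The Banach fixed-point theorem then gives that $\Xi^{j}$ converges uniformly, hence pointwise, to the unique fixed point of $T$, which is exactly $\underline{Q}$. Consistently with part 1), one may alternatively note that $\Xi^{j}\ge 0$ for all $j$ (immediate by induction from $\Xi^{0}=Q\ge 0$ and nonnegativity of $l$, $\Gamma^{Q}$, $Q$), so the non-increasing sequence $\{\Xi^{j}(x,r,a)\}$ is bounded below and converges pointwise to some $\Xi^{\infty}(x,r,a)\ge 0$; passing to the limit in \eqref{eq:25} (only pointwise convergence at the single point $(x',r',\mu(x',r'))$ is needed) shows $\Xi^{\infty}$ solves \eqref{eq:26}, and uniqueness of that solution, which again follows from the contraction property, identifies $\Xi^{\infty}$ with $\underline{Q}$.

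I do not expect a genuine obstacle: the only point demanding care is ensuring the contraction and limiting arguments take place in a bona fide complete space, i.e. that $l$, $\Gamma^{Q}$, $Q$ and the composed functions all remain bounded on $\mathcal{Z}$ — which is exactly what compactness of $\mathcal{Z}$ together with the twice-continuous-differentiability of $Q$ in Assumption 3 supplies. The conceptual pitfall to avoid is treating $\Gamma^{Q}$ or $\mu$ as iteration-dependent, as they are in Algorithm 1 itself; in Lemma 1 they are held fixed, and this is precisely what reduces the whole statement to the elementary monotone-iteration and contraction-mapping facts above.
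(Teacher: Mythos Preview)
Your proposal is correct and follows essentially the same approach as the paper. Part 1) is identical (induction, with the base case using \eqref{eq:24} and the inductive step using the recursion $\Xi^{j+1}-\Xi^{j}=\lambda\gamma[\Xi^{j}(\cdot')-\Xi^{j-1}(\cdot')]$); for Part 2) the paper carries out by hand the norm estimate $\|\Xi^{j}-\underline{Q}\|_{\infty}\le \frac{(\lambda\gamma)^{j}}{1-\lambda\gamma}\|\Xi^{0}-\Xi^{1}\|_{\infty}$, which is precisely the content of the Banach fixed-point argument you invoke, so your packaging via the operator $T$ is a mildly cleaner rendering of the same contraction mechanism.
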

\begin{proof}
To simplify presentation, we define the following compact notation
\begin{align}
\label{eq:27}
G^{\mu}(x,r) =& G\big(x,r,\mu(x,r)\big)\nonumber\\
G^{j,\mu}(x,r) =& G^{j}\big(x,r,\mu(x,r)\big)
\end{align}
for generic $G:\mathcal{Z}\rightarrow \mathbb{R}_{+}$, $G^{j}:\mathcal{Z}\rightarrow \mathbb{R}_{+}$ and $j\in \mathbb{N}_0$.
\begin{itemize}
\item[1)] We prove the desired statement based on mathematical induction. For $j=0$,
\begin{align*}
\Xi^{1}(x,r,a) =& l(x,r,a) + \Gamma^{Q}(x,r,a)+\lambda \gamma \Xi^{0,\mu}(x',r') \\
&+(1-\lambda)\gamma Q^{\mu}(x',r') \\
=& l(x,r,a) + \Gamma^{Q}(x,r,a)+\gamma Q^{\mu}(x',r') \\
\overset{\eqref{eq:24}}{\leq} & Q(x,r,a) \\
=& \Xi^{0}(x,r,a).
\end{align*}
Assume that $\Xi^{j}(x,r,a)\leq \Xi^{j-1}(x,r,a)$. Based on \eqref{eq:25},
\begin{align*}
\Xi^{j+1}(x,r,a) =&  l(x,r,a) + \Gamma^{Q}(x,r,a)+\lambda \gamma \Xi^{j,\mu}(x',r') \\
&+(1-\lambda)\gamma Q^{\mu}(x',r') \\
\leq &  l(x,r,a) + \Gamma^{Q}(x,r,a)+\lambda \gamma \Xi^{j-1,\mu}(x',r') \\
&+(1-\lambda)\gamma Q^{\mu}(x',r') \\
=& \Xi^{j}(x,r,a).
\end{align*}
\item[2)] According to \eqref{eq:25} and \eqref{eq:26},
\begin{align*}
&\Xi^{j+1}(x,r,a) - \underline{Q}(x,r,a)\nonumber \\
&=\lambda \gamma \big[\Xi^{j,\mu}(x',r')-\underline{Q}^{\mu}(x',r')\big]\nonumber \\
&= \lambda \gamma \big[\Xi^{j+1,\mu}(x',r')-\underline{Q}^{\mu}(x',r')\big]\nonumber \\
&+ \lambda \gamma \big[\Xi^{j,\mu}(x',r')-\Xi^{j+1,\mu}(x',r')\big].
\end{align*}
Then, for all $(x,r,a)\in \mathcal{Z}$,
\begin{align*}
&\big\|\Xi^{j+1}(x,r,a) - \underline{Q}(x,r,a)\big\|_{\infty}\nonumber \\
& \leq  \lambda \gamma \big\|\Xi^{j+1}(x,r,a)-\underline{Q}(x,r,a)\big\|_{\infty} \\
&+ \lambda \gamma \big\|\Xi^{j}(x,r,a)-\Xi^{j+1}(x,r,a)\big\|_{\infty},
\end{align*}
i.e.,
\begin{align*}
&\big\|\Xi^{j+1}(x,r,a) - \underline{Q}(x,r,a)\big\|_{\infty}\\
& \leq \frac{\lambda \gamma}{(1-\lambda \gamma)}\big\|\Xi^{j}(x,r,a)-\Xi^{j+1}(x,r,a)\big\|_{\infty}.
\end{align*}
Hence, we get that
\begin{align*}
&\big\|\Xi^{j}(x,r,a) - \underline{Q}(x,r,a)\big\|_{\infty}\\
\leq & \frac{\lambda \gamma}{(1-\lambda \gamma)}\big\|\Xi^{j-1}(x,r,a)-\Xi^{j}(x,r,a)\big\|_{\infty} \\
=& \frac{(\lambda \gamma)^{2}}{(1-\lambda \gamma)}\big\|\Xi^{j-2}(x,r,a)-\Xi^{j-1}(x,r,a)\big\|_{\infty} \\
=& \ldots \\
=&  \frac{(\lambda \gamma)^{j}}{(1-\lambda \gamma)}\big\|\Xi^{0}(x,r,a)-\Xi^{1}(x,r,a)\big\|_{\infty}.
\end{align*}
Therefore, $\lim_{j \to \infty} \big\|\Xi^{j}(x,r,a) - \underline{Q}(x,r,a)\big\|_{\infty} =0$, i.e., $\lim_{j\to \infty} \Xi^{j}(x,r,a) = \underline{Q}(x,r,a)$.
\end{itemize}
\end{proof}
We are now ready to proceed with the proof of Theorem 2. To simplify presentation, we define the compact notation
\begin{align*}
G^{\mu^{i}}(x,r) =& G\big(x,r,\mu^{i}(x,r)\big)\\
G^{i,\mu^{i}}(x,r) =& G^{i}\big(x,r,\mu^{i}(x,r)\big)
\end{align*}
for generic $G:\mathcal{Z}\rightarrow \mathbb{R}_{+}$, $G^{i}:\mathcal{Z}\rightarrow \mathbb{R}_{+}$ and $i\in \mathbb{N}_0$.
\begin{itemize}
\item[1)] Mathematical induction is employed to prove \eqref{eq:15}. For the case $i=0$, let $\Xi^{0}(x,r,a)= Q(x,r,a)=Q^{0}(x,r,a)$ and $\mu(x,r) = \mu^{0}(x,r)$. According to Lemma 1, the sequence $\{\Xi^{j}(x,r,a)\}_{j\in \mathbb{N}_0}$ is non-increasing and $\lim_{j\to \infty} \Xi^{j}(x,r,a) = Q^{1}(x,r,a)$. Then,
\begin{align*}
Q^{1}(x,r,a) =& \lim_{j\to \infty} \Xi^{j}(x,r,a) \nonumber \\
\leq & \Xi^{1}(x,r,a) \nonumber \\
\overset{\eqref{eq:25}}{=}& l(x,r,a) + \Gamma^{0}(x,r,a) +\lambda \gamma Q^{0,\mu^{0}}(x',r')\nonumber \\
&+ (1-\lambda)\gamma Q^{0,\mu^{0}}(x',r')\nonumber \\
= & l(x,r,a) + \Gamma^{0}(x,r,a) + \gamma Q^{0,\mu^{0}}(x',r')\nonumber \\
\overset{\eqref{eq:13}}{\leq} & Q^{0}(x,r,a).
\end{align*}
This means that \eqref{eq:15} holds for $i=0$. We then assume that \eqref{eq:15} holds for $i-1$
\begin{align}
\label{eq:28}
Q^{i}(x,r,a) \leq & l(x,r,a) +\Gamma^{i-1}(x,r,a) + \gamma Q^{i-1,\mu^{i-1}}(x',r') \nonumber \\
\leq & Q^{i-1}(x,r,a).
\end{align}
According to \eqref{eq:12},
\begin{align}
\label{eq:29}
Q^{i}(x,r,a) =& l(x,r,a) + \Gamma^{i-1}(x,r,a) + \lambda \gamma Q^{i,\mu^{i-1}}(x',r')\nonumber \\
&+(1-\lambda)\gamma Q^{i-1,\mu^{i-1}}(x',r') \nonumber \\
\overset{\eqref{eq:28}}{\geq} &  l(x,r,a) + \Gamma^{i-1}(x,r,a) + \lambda \gamma Q^{i,\mu^{i-1}}(x',r')\nonumber \\
&+(1-\lambda)\gamma Q^{i,\mu^{i-1}}(x',r') \nonumber \\
=& l(x,r,a) + \Gamma^{i-1}(x,r,a) + \gamma Q^{i,\mu^{i-1}}(x',r') \nonumber \\
\overset{\eqref{eq:11}}{\geq} &  l(x,r,a) + \Gamma^{i-1}(x,r,a) + \gamma Q^{i,\mu^{i}}(x',r').
\end{align}
If condition \eqref{eq:14} holds, then \eqref{eq:29} becomes 
\begin{align}
\label{eq:30}
Q^{i}(x,r,a) \geq & l(x,r,a) + \Gamma^{i}(x,r,a) + \gamma Q^{i,\mu^{i}}(x',r').
\end{align}
Next, let $\Xi^{0}(x,r,a)= Q(x,r,a)=Q^{i}(x,r,a)$ and $\mu(x,r) = \mu^{i}(x,r)$. According to Lemma 1, the sequence $\{\Xi^{j}(x,r,a)\}_{j\in \mathbb{N}_0}$ is non-increasing and $\lim_{j\to \infty} \Xi^{j}(x,r,a) = Q^{i+1}(x,r,a)$. Therefore,
\begin{align}
\label{eq:31}
Q^{i+1}(x,r,a) =& \lim_{j\to \infty} \Xi^{j}(x,r,a) \nonumber \\
\leq & \Xi^{1}(x,r,a) \nonumber \\
\overset{\eqref{eq:25}}{=}& l(x,r,a) + \Gamma^{i}(x,r,a) +\lambda \gamma Q^{i,\mu^{i}}(x',r')\nonumber \\
&+ (1-\lambda)\gamma Q^{i,\mu^{i}}(x',r')\nonumber \\
=& l(x,r,a) + \Gamma^{i}(x,r,a) + \gamma Q^{i,\mu^{i}}(x',r').
\end{align}
Combining \eqref{eq:30} and \eqref{eq:31} proves \eqref{eq:15} for all $i\in \mathbb{N}_0$ and $(x,r,a)\in \mathcal{Z}$.
\item[2)] Based on \eqref{eq:15}, the sequence $\{Q^{i}(x,r,a)\}_{i\in \mathbb{N}_0}$ is non-increasing and, due to the fact that $l$ and $\Gamma^{i}$ are non-negative,  lower bounded by $0$ for all $i$. Therefore, it has a point-wise limit $Q^{\infty}(x,r,a) = \lim_{i\rightarrow \infty} Q^{i}(x,r,a)$. By defining $\mu^{\infty}(x,r) = \underset{u}{\mathrm{argmin }}\enspace Q^{\infty, u}(x,r)$ and computing the limit of \eqref{eq:15}, we get
\begin{align*}
Q^{\infty}(x,r,a) \leq & l(x,r,a)+ \Gamma^{\infty}(x,r,a)+ \gamma Q^{\infty,\mu^{\infty}}(x',r')\nonumber \\
\leq & Q^{\infty}(x,r,a),\nonumber
\end{align*}
i.e., 
\begin{align}
\label{eq:32}
Q^{\infty}(x,r,a)=& l(x,r,a)+\Gamma^{\infty}(x,r,a)+ \gamma Q^{\infty,\mu^{\infty}}(x',r').
\end{align}
Based on the uniqueness of solutions to the Bellman equation \cite{b72,b73,b89,b90}, we finally get that \eqref{eq:32} is essentially \eqref{eq:6}, i.e., $Q^{\infty}(x,r,a)=Q^{\star}(x,r,a)$ and therefore $\mu^{\infty}(x,r)=\mu^{\star}(x,r)$.
\end{itemize}

\section{Proof of Corrolary 1}
We use the compact notation \eqref{eq:27} to present the proof. Consider the sequence $\{\Xi_{\Lambda}^{j}(x,r,a)\}_{j\in \mathbb{N}}$ generated by
\begin{align}
\label{eq:33}
\Xi_{\Lambda}^{j+1}(x,r,a) =& l(x,r,a) + \Gamma^{Q}(x,r,a) +\Lambda \gamma \Xi_{\Lambda}^{j,\mu}(x',r')\nonumber \\
&+(1-\Lambda)\gamma Q^{\mu}(x',r'),
\end{align}
with $\Xi_{\Lambda}^{0}(x,r,a)=Q(x,r,a)$ and $\Lambda \in \{\lambda_1,\lambda_2\}$. According to Lemma 1, the sequence $\{\Xi_{\Lambda}^{j}(x,r,a)\}_{j\in \mathbb{N}_0} $ is non-increasing. We will now prove that $\Xi_{\lambda_2}^{j}(x,r,a) \leq \Xi_{\lambda_1}^{j}(x,r,a)$ for all $j\in \mathbb{N}_0$ and $(x,r,a)\in \mathcal{Z}$ based on mathematical induction. We firstly note that $\Xi_{\lambda_1}^{0}(x,r,a)= \Xi_{\lambda_2}^{0}(x,r,a)=Q(x,r,a)$. We then assume that $\Xi_{\lambda_2}^{j-1}(x,r,a)\leq \Xi_{\lambda_1}^{j-1}(x,r,a)$. By realizing \eqref{eq:33} for $\lambda_1$, we get
\begin{align}
\label{eq:34}
\Xi_{\lambda_1}^{j}(x,r,a) =& l(x,r,a) + \Gamma^{Q}(x,r,a) +\lambda_1 \gamma \Xi_{\lambda_1}^{j-1,\mu}(x',r')\nonumber \\
&+(1-\lambda_1)\gamma Q^{\mu}(x',r')\nonumber \\
\geq &  l(x,r,a) + \Gamma^{Q}(x,r,a) +\lambda_1 \gamma \Xi_{\lambda_2}^{j-1,\mu}(x',r')\nonumber \\
&+(1-\lambda_1)\gamma Q^{\mu}(x',r').
\end{align}
Hence, by now realizing \eqref{eq:33} for $\lambda_1$ and $\lambda_2$ and applying \eqref{eq:34}, we have that
\begin{align}
\label{eq:35}
&\Xi_{\lambda_1}^{j}(x,r,a)-\Xi_{\lambda_2}^{j}(x,r,a) \geq \nonumber \\ &(\lambda_1-\lambda_2)\gamma \Xi_{\lambda_2}^{j-1,\mu}(x',r')- (\lambda_1-\lambda_2)\gamma Q^{\mu}(x',r')\nonumber \\
&= (\lambda_2-\lambda_1)\gamma \big[Q^{\mu}(x',r')-\Xi_{\lambda_2}^{j-1,\mu}(x',r')\big]\nonumber \\
&= (\lambda_2-\lambda_1)\gamma \big[\Xi_{\lambda_2}^{0,\mu}(x',r')-\Xi_{\lambda_2}^{j-1,\mu}(x',r')\big]\nonumber \\
&\geq 0,
\end{align}
which means that $\Xi_{\lambda_2}^{j}(x,r,a) \leq \Xi_{\lambda_1}^{j}(x,r,a)$ for all $j\in \mathbb{N}_0$ and $(x,r,a)\in \mathcal{Z}$. According to Lemma 1, we have that $Q_{\Lambda}(x,r,a) = \lim_{j\to \infty} \Xi_{\Lambda}^{j}(x,r,a)$ for $\Lambda \in \{\lambda_1,\lambda_2\}$. Therefore, it follows from \eqref{eq:35} that $Q_{\lambda_2}(x,r,a) \leq Q_{\lambda_1}(x,r,a)$ for all $(x,r,a)\in \mathcal{Z}$.

\begin{IEEEbiography}[{\includegraphics[width=1in,height=1.25in,clip,keepaspectratio]{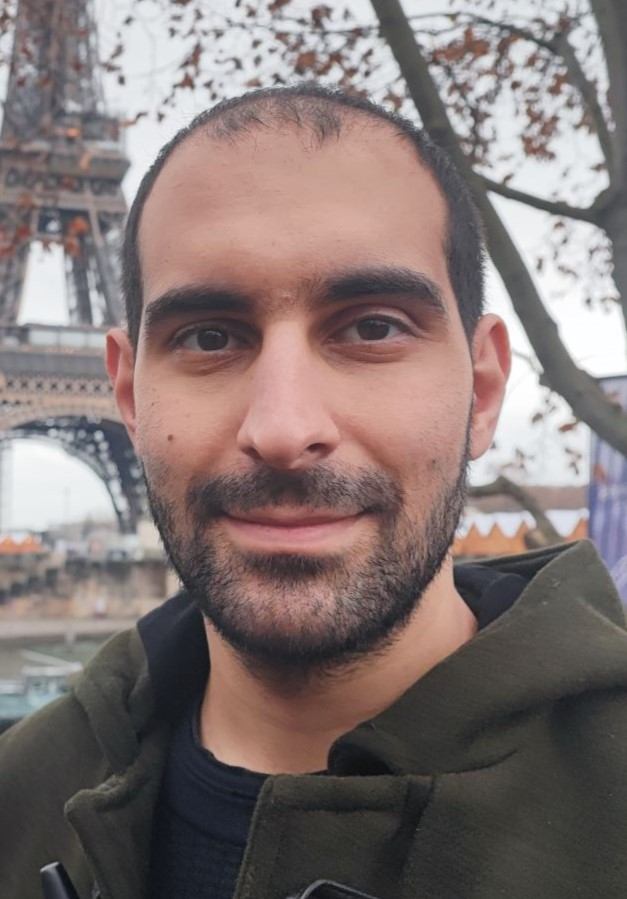}}]{Alexandros Tanzanakis}
received a 5-year Diploma (equivalent to a Master's degree) in Electrical and Computer Engineering with the highest honours from the Technical University of Crete, Greece in 2016, and a PhD in Information Technology and Electrical Engineering from ETH Zurich, Switzerland in 2023. His research interests include advanced topics in learning-based control, data-driven reinforcement learning, game theory and multiagent systems, as well as intelligent biomedical control with emphasis on the design of novel, fully-automated, personalized, closed-loop drug delivery systems.
\end{IEEEbiography}

\begin{IEEEbiography}[{\includegraphics[width=1in,height=1.25in,clip,keepaspectratio]{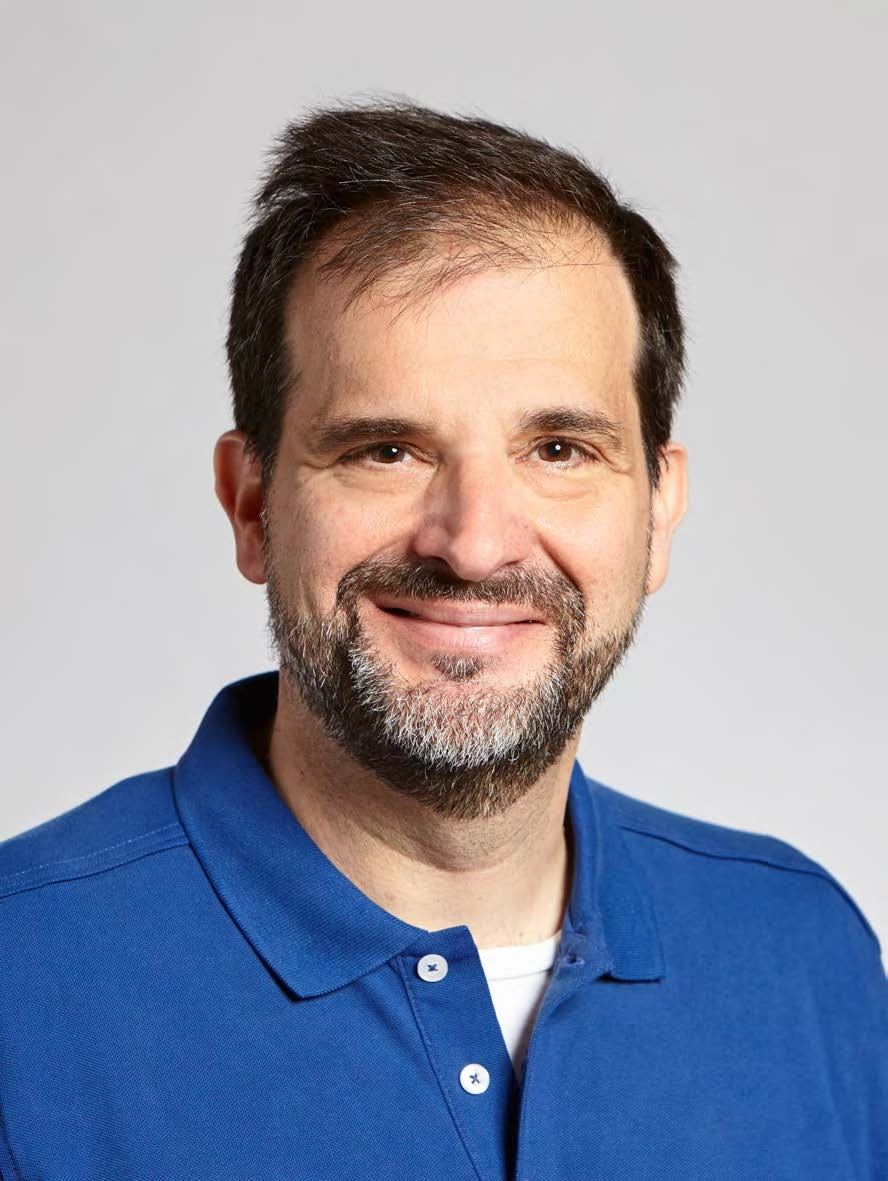}}]{John Lygeros}
received a B.Eng. degree in 1990 and an M.Sc. degree in 1991 from Imperial College, London, U.K. and a Ph.D. degree in 1996 at the University of California, Berkeley. After research appointments at M.I.T., U.C. Berkeley and SRI International, he joined the University of Cambridge in 2000 as a University Lecturer. Between March 2003 and July 2006 he was an Assistant Professor at the Department of Electrical and Computer Engineering, University of Patras, Greece. In July 2006 he joined the Automatic Control Laboratory at ETH Zurich where he is currently serving as the Professor for Computation and Control and the Head of the laboratory. His research interests include modelling, analysis, and control of large-scale systems, with applications to biochemical networks, energy systems, transportation, and industrial processes. John Lygeros is a Fellow of IEEE, and a member of IET and the Technical Chamber of Greece. Since 2013 he is serving as the Vice-President Finances and a Council Member of the International Federation of Automatic Control and since 2020 as the Director of the National Center of Competence in Research "Dependable Ubiquitous Automation" (NCCR Automation).
\end{IEEEbiography}

\end{document}